\title{Hybrid Copula Estimators}
\author{%
  Johan Segers\\[1ex]%
  \small Universit\'e catholique de Louvain\\%
  \small ISBA, Voie du Roman Pays 20, bte L1.04.01\\%
  \small B-1348 Louvain-la-Neuve, Belgium\\%
  \small johan.segers@uclouvain.be%
}
\date{November 28, 2014}
\newtheorem{condition}{Condition}[section]
\newtheorem{theorem}[condition]{Theorem}
\newtheorem{lemma}[condition]{Lemma}
\newtheorem{corollary}[condition]{Corollary}
\theoremstyle{definition}
\newtheorem{example}[condition]{Example}
\newtheorem{remark}[condition]{Remark}
\numberwithin{equation}{section}
\newcommand{\abs}[1]{\left\lvert{#1}\right\rvert}
\newcommand{\norm}[1]{\left\lVert{#1}\right\rVert}
\newcommand{\vc}[1]{\bm{#1}}
\newcommand{\Cb}{\mathbb{C}}
\newcommand{\Db}{\mathbb{D}}
\newcommand{\Gb}{\mathbb{G}}
\newcommand{\Nb}{\mathbb{N}}
\newcommand{\prob}[1]{\operatorname{P}\left[{#1}\right]}
\newcommand{\expec}[1]{\operatorname{E}\left[{#1}\right]}
\newcommand{\var}[1]{\operatorname{var}\left[{#1}\right]}
\newcommand{\cov}[1]{\operatorname{cov}\left[{#1}\right]}
\newcommand{\reals}{\mathbb{R}}
\newcommand{\DD}{\mathds{D}}
\newcommand{\eps}{\varepsilon}
\newcommand{\dto}{\rightsquigarrow}
\newcommand{\inv}{^\leftarrow}
\newcommand{\1}{\mathds{1}}
\newcommand{\id}{\mathrm{id}}
\renewcommand{\le}{\leqslant}
\renewcommand{\ge}{\geqslant}
\begin{document}

\maketitle

\begin{abstract}
An extension of the empirical copula is considered by combining an estimator of a multivariate cumulative distribution function with estimators of the marginal cumulative distribution functions for marginal estimators that are not necessarily equal to the margins of the joint estimator. Such a hybrid estimator may be reasonable when there is additional information available for some margins in the form of additional data or stronger modelling assumptions. A functional central limit theorem is established and some examples are developed.
\end{abstract}

\section{Introduction}
\label{sec:intro}

Let $H$ be a $p$-variate cumulative distribution function with continuous margins $F_1, \ldots, F_p$ and copula $C$ \citep{sklar:1959}. We have
\begin{align*}
  H( \vc{x} ) &= C \bigl( F_1(x_1), \ldots, F_p(x_p) \bigr), && \vc{x} \in \reals^p, \\
  C( \vc{u} ) &= H \bigl( F_1\inv(u_1), \ldots, F_p\inv(u_p) \bigr), && \vc{u} \in [0, 1]^p.
\end{align*}
Here, $G\inv$ denotes the left-continuous inverse of a univariate cumulative distribution function $G$, i.e.,
\[
  G\inv(u) = \inf \{ x \in \reals : G(x) \ge u \}, \qquad u \in [0, 1].
\]
Throughout, standard conventions regarding infinities are employed: $\inf \varnothing = +\infty$, $G(-\infty) = 0$, and $G(+\infty) = 1$.

Let $\hat{H}_n$ and $\hat{F}_{n,j}$ be estimator sequences of $H$ and $F_j$ ($j = 1, \ldots, p$), respectively. Consider the copula estimator
\begin{equation}
\label{eq:hybrid}
  \hat{C}_n( \vc{u} ) = \hat{H}_n \bigl( \hat{F}_{n,j}\inv(u_1), \ldots, \hat{F}_{n,p}\inv(u_p) \bigr),
  \qquad \vc{u} \in [0, 1]^p.
\end{equation}
Note that $\hat{F}_{n,j}$ is not necessarily equal to the $j$th marginal distribution function, $\hat{H}_{n,j}$, of $\hat{H}_n$. We call $\hat{C}_n$ a \emph{hybrid copula estimator}.

Given a rate $0 < r_n \to \infty$ (typically $r_n = \sqrt{n}$), the normalized estimation error of the hybrid copula estimator is
\begin{equation}
\label{eq:Cbn}
  \Cb_n( \vc{u} ) = r_n \bigl( \hat{C}_n( \vc{u} ) - C( \vc{u} ) \bigr), \qquad \vc{u} \in [0, 1]^p.
\end{equation}
The aim is to establish weak convergence of $\Cb_n$ in the space $\ell^\infty([0, 1]^p)$ of bounded, real-valued functions on $[0, 1]^p$ equipped with the supremum norm.

If $\hat{H}_n$ and $\hat{F}_{n,j} = \hat{H}_{n,j}$ are the joint and marginal empirical distribution functions of a $p$-variate sample of size $n$, then $\hat{C}_n$ is just the Deheuvels--R\"uschendorf empirical copula, see Examples~\ref{ex:empcopproc:iid} and~\ref{ex:empcopproc:dep} below. However, there may be good reasons not to estimate $F_j$ by $\hat{H}_{n,j}$ but by a different estimator. It may be that there is information available on the $j$th margin which cannot directly be used by the joint estimator $\hat{H}_n$.
\begin{itemize}
\item
A parametric model may be reasonable for some or all of the marginal distributions but not for the joint distribution (Example~\ref{ex:parametric}). This is the case for instance when the data are vectors of annual maxima. Asymptotic theory then suggests to model the vector of componentwise maxima by a multivariate max-stable distribution \citep{dehaan:resnick:1977, deheuvels:1978, galambos:1978}. The marginal distributions are univariate extreme-value distributions, whereas the copula belongs to the infinite-dimensional family of extreme-value copulas.
\item
Some entries in the $n \times p$ data matrix may be missing (Example~\ref{ex:missing}). Then $\hat{H}_n$ may be defined as the empirical distribution function of all data rows which are complete, whereas $\hat{F}_{n,j}$ is the empirical distribution function of all observed entries in the $j$th column.
\item
Similarly, in a time series setting, the observation periods of the $p$ univariate series could be different and overlap only partially. Again, one could estimate $F_j$ by the complete series for that variable but estimate $H$ only based on the time period where all series were recorded simultaneously. In the same spirit, there may be additional samples for some of the variables.
\end{itemize}

The structure of the paper is as follows. The main result, Theorem~\ref{thm:hybrid}, is given in Section~\ref{sec:main}, stating weak convergence of the hybrid copula estimator process in \eqref{eq:Cbn} under high-level conditions on the estimators of the joint and marginal distribution functions. Special cases and examples are worked out in Section~\ref{sec:examples}. All proofs and calculations are deferred to Section~\ref{sec:proofs}. 

Throughout, the following notations are used. For an arbitrary set $T$, let $\ell^\infty(T)$ be the space of bounded, real-valued functions on $T$, the space being equipped with the supremum distance $\norm{f}_\infty = \sup_{t \in T} \abs{f(t)}$ for $f \in \ell^\infty(T)$. The indicator variable of a set $E$ is denoted by $\1_E$, whereas the identity mapping on a set $E$ is denoted by $\id_E$. Weak convergence in the sense of J. Hoffmann-J\o{}rgensen is denoted by the arrow `$\dto$'; see Part~1 in the monograph by \cite{vandwell96}.

\section{Main result}
\label{sec:main}

Besides the continuity of the margins $F_1, \ldots, F_p$, two assumptions will be made. The first assumption imposes among others a bit of smoothness on the target copula $C$, without which there is litte hope of establishing weak convergence of $\Cb_n$ in \eqref{eq:Cbn} with respect to the supremum norm on $\ell^\infty([0, 1]^p)$  \citep{segers:2012}. The second assumption is a high-level condition concerning the asymptotic distribution of the estimators $\hat{H}_n$ and $\hat{F}_{n,j}$ and is to be checked on a case-by-case basis. See Remarks~\ref{rem:margins} and~\ref{rem:Donsker} and see the examples in Section~\ref{sec:examples}.

\begin{condition}
\label{cond:C}
\begin{enumerate}[(a)]
\item
The $p$-variate distribution function $H$ has continuous margins $F_1, \ldots, F_p$ and copula $C$.
\item
For all $j \in \{1, \ldots, p\}$, the first-order partial derivative $\dot{C}_j( \vc{u} ) = \partial C( \vc{u} ) / \partial u_j$ exists and is continuous on the set $\{ \vc{u} \in [0, 1]^p : 0 < u_j < 1 \}$.
\end{enumerate}
\end{condition}

For convenience, collect the marginal distribution and quantile functions into vector-valued functions $\vc{F}$ and $\vc{F}\inv$:
\begin{align}
\label{eq:vcF}
  \vc{F}(\vc{x}) &= \bigl( F_1(x_1), \ldots, F_p(x_p) \bigr), && \vc{x} \in \reals^p; \\
\label{eq:vcFinv}
  \vc{F}\inv(\vc{u}) &= \bigl( F_1\inv(u_1), \ldots, F_p\inv(u_p) \bigr), && \vc{u} \in [0, 1]^p.
\end{align}

\begin{condition}
\label{cond:HF}
There exists $0 < r_n \to \infty$ such that in the space $\ell^\infty(\reals^p) \otimes ( \ell^\infty(\reals) \otimes \cdots \otimes \ell^\infty(\reals) )$ equipped with the topology of uniform convergence, we have joint weak convergence
\begin{multline}
\label{eq:HF}
  \bigl( r_n (\hat{H}_n - H); \, r_n (\hat{F}_{n,1} - F_1), \ldots, r_n (\hat{F}_{n,p} - F_p) \bigr) \\
  \dto ( \alpha \circ \vc{F}; \beta_1 \circ F_1, \ldots, \beta_p \circ F_p ), \qquad n \to \infty.
\end{multline}
The stochastic processes $\alpha$ and $\beta_j$ take values in $\ell^\infty([0, 1]^p)$ and $\ell^\infty([0, 1])$, respectively, and are such that $\alpha \circ \vc{F}$ and $\beta_j \circ F_j$ have continuous trajectories on $[-\infty,\infty]^p$ and $[-\infty,\infty]$ almost surely.
\end{condition}

Usually, $r_n = \sqrt{n}$, although Condition~\ref{cond:HF} allows for different convergence rates. Joint weak convergence in \eqref{eq:HF} can typically be established when the estimators $\hat{H}_n$ and $\hat{F}_{n,j}$ can be written as functionals of the same underlying empirical process.
Because $\dot{C}_j( \vc{u} )$ need not be defined if $u_j \in \{0, 1\}$, some care is needed in the formulation of the following theorem.


\begin{theorem}[Hybrid copula process]
\label{thm:hybrid}
If Conditions~\ref{cond:C} and \ref{cond:HF} hold, then, uniformly in $\vc{u} \in [0, 1]^p$, 
\begin{multline}
\label{eq:hybrid:repr}
  r_n \{ \hat{C}_n(\vc{u}) - C(\vc{u}) \} 
  = r_n \{ \hat{H}_n( \vc{F}\inv(\vc{u}) ) - C(\vc{u}) \} \\
  - 
  \sum_{j=1}^p \dot{C}_j(\vc{u}) \, r_n \{ \hat{F}_{n,j}( F_j\inv(u_j)) - u_j \} \, \1_{(0, 1)}(u_j) + o_p(1),
\end{multline}
as $n \to \infty$. Hence, in $\ell^\infty([0, 1]^p)$ equipped with the supremum norm, as $n \to \infty$,
\begin{equation}
\label{eq:hybrid:asym}
  \bigl( r_n \{ \hat{C}_n(\vc{u}) - C(\vc{u}) \} \bigr)_{\vc{u} \in [0, 1]^p}
  \dto 
  \left( \alpha(\vc{u}) - {\textstyle \sum_{j=1}^p} \dot{C}_j(\vc{u}) \, \beta_j(u_j) \right)_{\vc{u} \in [0, 1]^p}.
\end{equation}
The processes $\alpha$ and $\beta_j$ have continuous trajectories almost surely. The right-hand side in \eqref{eq:hybrid:asym} is well-defined because $\beta_j(0) = \beta_j(1) = 0$ almost surely.
\end{theorem}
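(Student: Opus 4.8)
The plan is to first establish the uniform asymptotic linear representation \eqref{eq:hybrid:repr} and then read off the weak convergence \eqref{eq:hybrid:asym} by the continuous mapping theorem applied to the jointly convergent processes of Condition~\ref{cond:HF}. Abbreviating $\hat{\vc{F}}_n\inv(\vc{u}) = (\hat{F}_{n,1}\inv(u_1), \ldots, \hat{F}_{n,p}\inv(u_p))$ and using $C(\vc{u}) = H(\vc{F}\inv(\vc{u}))$, I would split the estimation error as
\[
  \hat{C}_n(\vc{u}) - C(\vc{u})
  = \bigl\{ \hat{H}_n(\hat{\vc{F}}_n\inv(\vc{u})) - H(\hat{\vc{F}}_n\inv(\vc{u})) \bigr\}
  + \bigl\{ H(\hat{\vc{F}}_n\inv(\vc{u})) - C(\vc{u}) \bigr\}
\]
and treat the two brackets separately.

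For the first bracket, Condition~\ref{cond:HF} gives $r_n(\hat{H}_n - H) \dto \alpha \circ \vc{F}$ with a limit having continuous trajectories, so the sequence $r_n(\hat{H}_n - H)$ is asymptotically equicontinuous, the relevant closeness of arguments being measured after the probability integral transform. Since $\sup_{u_j} \abs{F_j(\hat{F}_{n,j}\inv(u_j)) - u_j} \to 0$ in probability by the uniform consistency of the marginal estimators, the random argument $\hat{\vc{F}}_n\inv(\vc{u})$ is, on that scale, uniformly close to $\vc{F}\inv(\vc{u})$. Equicontinuity then permits replacing the former by the latter at asymptotically negligible cost, giving
\[
  r_n \bigl\{ \hat{H}_n(\hat{\vc{F}}_n\inv(\vc{u})) - H(\hat{\vc{F}}_n\inv(\vc{u})) \bigr\}
  = r_n \bigl\{ \hat{H}_n(\vc{F}\inv(\vc{u})) - C(\vc{u}) \bigr\} + o_p(1)
\]
uniformly in $\vc{u}$, which is the first term on the right of \eqref{eq:hybrid:repr}.

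For the second bracket I would use the factorization $H = C \circ \vc{F}$, so that $H(\hat{\vc{F}}_n\inv(\vc{u})) = C(\hat{\vc{v}}_n(\vc{u}))$ with $\hat{v}_{n,j}(u_j) = F_j(\hat{F}_{n,j}\inv(u_j))$, while $C(\vc{u}) = H(\vc{F}\inv(\vc{u}))$ by continuity of the margins. The first key ingredient is the quantile approximation
\[
  r_n \bigl\{ F_j(\hat{F}_{n,j}\inv(u_j)) - u_j \bigr\}
  = - r_n \bigl\{ \hat{F}_{n,j}(F_j\inv(u_j)) - u_j \bigr\} + o_p(1),
\]
uniformly in $u_j \in [0, 1]$, which follows from the near-inverse relation $\hat{F}_{n,j}(\hat{F}_{n,j}\inv(u_j)) \approx u_j$ together with the asymptotic equicontinuity of $r_n(\hat{F}_{n,j} - F_j) \circ F_j\inv \dto \beta_j$. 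The second ingredient is the first-order expansion of the copula furnished by Condition~\ref{cond:C}(b), $C(\hat{\vc{v}}_n(\vc{u})) - C(\vc{u}) = \sum_{j=1}^p \dot{C}_j(\vc{u})\,(\hat{v}_{n,j}(u_j) - u_j) + o(\norm{\hat{\vc{v}}_n(\vc{u}) - \vc{u}})$; inserting the quantile approximation then produces the sum appearing in \eqref{eq:hybrid:repr}.

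The step I expect to be the main obstacle is making the copula expansion hold uniformly in $\vc{u}$ up to and including the boundary of $[0, 1]^p$, where the partial derivatives $\dot{C}_j(\vc{u})$ need not exist. Following the device used by \citet{segers:2012}, I would control the contribution near each face $\{u_j \in \{0, 1\}\}$ by combining the global Lipschitz property of copulas (so that a $C$-increment is dominated by the corresponding coordinate increments) with the boundary condition $\beta_j(0) = \beta_j(1) = 0$ almost surely, which makes the $j$th summand vanish there; the indicator $\1_{(0, 1)}(u_j)$ in \eqref{eq:hybrid:repr} records exactly this truncation. Once \eqref{eq:hybrid:repr} is established uniformly, its right-hand side is a continuous functional of the pair $\bigl( r_n(\hat{H}_n - H) \circ \vc{F}\inv, \, (r_n(\hat{F}_{n,j} - F_j) \circ F_j\inv)_{j=1}^p \bigr)$, whose joint weak limit is $(\alpha, \beta_1, \ldots, \beta_p)$ by Condition~\ref{cond:HF} and the identity $\vc{F}(\vc{F}\inv(\vc{u})) = \vc{u}$; the continuous mapping theorem then delivers \eqref{eq:hybrid:asym}, the limit being well-defined thanks to the same boundary condition on the $\beta_j$.
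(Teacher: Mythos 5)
Your proposal is correct and follows essentially the same route as the paper: the identical decomposition of $\hat{C}_n - C$ into $\{\hat{H}_n \circ \hat{\vc{F}}_n\inv - H \circ \hat{\vc{F}}_n\inv\}$ plus $\{H \circ \hat{\vc{F}}_n\inv - C\}$, the same equicontinuity argument for the first term, and the same two key ingredients for the second (the Vervaat-type quantile approximation and a first-order expansion of $C$ with the boundary contribution controlled by the Lipschitz property of copulas together with $\beta_j(0)=\beta_j(1)=0$). The paper merely formalizes your heuristic steps as deterministic lemmas (Lemmas~\ref{lem:Vervaat} and~\ref{lem:Taylor}) lifted to random elements via the extended continuous mapping theorem rather than via a pointwise $o(\norm{\cdot})$ remainder.
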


\begin{remark}[No hybridisation]
\label{rem:margins}
If, as in the standard situation, $\hat{F}_{n,j}$ is equal to the $j$th margin of $\hat{H}_n$ for each $j \in \{1, \ldots, p\}$, then, rather than assuming~\eqref{eq:HF}, it suffices to assume
\begin{equation}
\label{eq:HF:alpha}
  r_n (\hat{H}_n - H) \dto \alpha \circ \vc{F}, \qquad n \to \infty,
\end{equation}
in $\ell^\infty(\reals^p)$, where $\alpha$ is a random element in $\ell^\infty([0, 1]^p)$ with continuous trajectories almost surely. Indeed, by the continuous mapping theorem \citep[Theorem~1.3.6]{vandwell96}, equation~\eqref{eq:HF:alpha} implies equation~\eqref{eq:HF} with 
\[ 
  \beta_j(u_j) = \alpha(1, \ldots, 1, u_j, 1, \ldots, 1), \qquad u_j \in [0, 1], 
\]
with $u_j$ appearing at the $j$th coordinate.
\end{remark}

\begin{remark}[Empirical process representation]
\label{rem:Donsker}
Let $\vc{X}_1, \ldots, \vc{X}_n$ be an independent random sample from $H$. For $f \in L^2(H)$, put
\[
  \Gb_n f = \sqrt{n} \left( \frac{1}{n} \sum_{i=1}^n f(\vc{X}_i) - \expec{ f( \vc{X}_1 ) } \right).
\]
Assume there exists functions $f_{\vc{x}}$ and $f_{x, j}$ in $L^2(H)$ satisfying the following assumptions:
\begin{itemize}
\item
We have, as $n \to \infty$,
\begin{align*}
  \sup_{\vc{x} \in \reals^p} \abs{ \sqrt{n} \{ \hat{H}_n( \vc{x} ) - H( \vc{x} ) \} - \Gb_n f_{\vc{x}} } &= o_p(1), \\
  \sup_{x \in \reals} \abs{ \sqrt{n} \{ \hat{F}_{n,j}(x) - F_j(x) \} - \Gb_n f_{x,j} } &= o_p(1), \qquad j \in \{1, \ldots, p\}.
\end{align*}
\item
We have $f_{\vc{x}} = f_{\vc{x}'}$ in $L^2(H)$ as soon as $F_j(x_j) = F_j(x_j')$ for all $j \in \{1, \ldots, p\}$; similarly $f_{x, j} = f_{x', j}$ in $L^2(H)$ as soon as $F_j(x) = F_j(x')$. 
\item
The maps $\vc{x} \mapsto f_{\vc{x}}$ and $x \mapsto f_{x,j}$ are $L^2(H)$-continuous.
\item 
The collection
\[
  \mathcal{F} = \{ f_{\vc{x}} : \vc{x} \in \reals^p \} \cup \{ f_{x, j} : x \in \reals, 1 \le j \le p \}
\]
is $H$-Donsker, i.e., $\Gb_n \dto \Gb$ as $n\to\infty$ in the space $\ell^\infty( \mathcal{F} )$. The limit $\Gb$ is a tight, centered Gaussian process with covariance function 
\begin{equation}
\label{eq:cov}
  \cov{ \Gb f, \Gb g } = \cov{ f(\vc{X}_1), g(\vc{X}_1) }, \qquad f, g \in \mathcal{F}. 
\end{equation}
\end{itemize}
Then Condition~\ref{cond:HF} is fulfilled with
\begin{align*}
  \alpha( \vc{u} ) &= \Gb f_{\vc{x}(\vc{u})}, & \vc{x}(\vc{u}) &= (F_1\inv(u_1), \ldots, F_p\inv(u_p)), \\
  \beta_j(u) &= \Gb f_{x_j(u),j}, & x_j(u) &= F_j\inv(u).
\end{align*}
It follows that, as $n \to \infty$,
\[
  \sqrt{n} (\hat{C}_n - C)
  \dto \left( \Gb f_{\vc{x}(\vc{u})} - \textstyle{\sum_{j=1}^p} \, \dot{C}_j( \vc{u} ) \, \Gb f_{x_j(u_j)} \right)_{\vc{u} \in [0, 1]^p}.
\]
For each $\vc{u}$, the right-hand side is a zero-mean Gaussian random variable whose variance can be computed via \eqref{eq:cov}, yielding
\begin{multline*}
  \var{ \Gb f_{\vc{x}(\vc{u})} - \textstyle{\sum_{j=1}^p} \, \dot{C}_j( \vc{u} ) \, \Gb f_{x_j(u_j)} } \\
  = \var{f_{\vc{x}(\vc{u})}( \vc{X}_1 ) - \textstyle{\sum_{j=1}^p} \,\dot{C}_j( \vc{u} ) \, f_{x_j(u_j)}(\vc{X}_1) },
  \qquad \vc{u} \in [0, 1]^p.
\end{multline*}

For the usual empirical distribution functions, the above assumptions are fulfilled with $f_{\vc{x}} = \1_{(-\vc{\infty}, \vc{x}]}$ and $f_{x,j} = \1_{\{ \vc{y} : y_j \le x \}}$. The conclusion of Theorem~\ref{thm:hybrid} then leads to the familiar asymptotics for the empirical copula process (Examples~\ref{ex:empcopproc:iid} and~\ref{ex:empcopproc:dep}).
\end{remark}

Let $\DD_\phi$ be the subset of $\DD = \ell^\infty( \reals^p ) \otimes ( \ell^\infty( \reals ) \otimes \ldots \otimes \ell^\infty( \reals ) )$ consisting of all vectors $(H; F_1, \ldots, F_p)$ such that $H$ is a $p$-variate cumulative distribution function and $F_1, \ldots, F_p$ are univariate cumulative distribution functions. Consider the map
\begin{equation}
\label{eq:Phi}
  \phi : \DD_\Phi \to \ell^\infty( [0, 1]^p ) : (H; F_1, \ldots, F_p) \mapsto H \circ \vc{F}\inv,
\end{equation}
with $\vc{F}\inv$ as in \eqref{eq:vcFinv}. One way to show Theorem~2.3 is by an application of the functional delta method \citep[Theorem~3.9.4]{vandwell96} to the map $\phi$, provided the map $\phi$ can be shown to be compact (Hadamard) differentiable. In Section~\ref{sec:proofs}, however, the proof of Theorem~2.3 is based on the extended continuous mapping theorem \citep[Theorem1.11.1]{vandwell96} directly. Since weak convergence of deterministic mappings is equal to ordinary convergence, a by-product of Theorem~2.3 is the compact differentiability of $\phi$. This fact being potentially useful in other contexts too, it is stated explicitly below. Let $\DD_0$ be the subset of $\DD$ consisting of all vectors $h = (\alpha \circ \vc{F}; \beta_1 \circ F_1, \ldots, \beta_p \circ F_p)$ where $\alpha \in \ell^\infty( [0, 1]^p )$ and $\beta_j \in \ell^\infty( [0, 1] )$ are such that $\alpha \circ \vc{F}$ and $\beta_j \circ F_j$ are continuous on $[-\infty, \infty]^p$ and $[-\infty, \infty]$, respectively.  

\begin{corollary}[Compact differentiability]
\label{cor:Hdm}
Let $H$ be a $p$-variate cumulative distribution function with continuous margins $F_1, \ldots, F_p$ and with copula $C$ satisfying Condition~\ref{cond:C}. The map $\phi$ in \eqref{eq:Phi} is Hadamard differentiable at $\theta = (H; F_1, \ldots, F_p)$ tangentially to $\DD_0$. The Hadamard derivative $\phi_\theta'$ evaluated at $h \in \DD_0$ is given by the map $\phi_\theta'(h) \in \ell^\infty( [0, 1]^p )$ defined as
\begin{equation*}
  (\phi_\theta'(h))( \vc{u} ) = \alpha (\vc{u}) - \sum_{j=1}^p \dot{C}_j( \vc{u} ) \, \beta_j(u_j),
  \qquad \vc{u} \in [0, 1]^p.
\end{equation*}
Moreover, $\alpha$ and $\beta_j$ are continuous and $\beta_j(0) = \beta_j(1) = 0$.
\end{corollary}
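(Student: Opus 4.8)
The plan is to verify the definition of Hadamard differentiability directly, observing that the assertion is the deterministic core of Theorem~\ref{thm:hybrid}. Fix $h=(\alpha\circ\vc F;\beta_1\circ F_1,\ldots,\beta_p\circ F_p)\in\DD_0$ and take arbitrary sequences $t_n\to 0$ in $(0,\infty)$ and $h_n\to h$ in $\DD$ such that $\theta+t_n h_n\in\DD_\phi$ for every $n$. Writing $h_n=(A_n;B_{n,1},\ldots,B_{n,p})$, this means $A_n\to\alpha\circ\vc F$ and $B_{n,j}\to\beta_j\circ F_j$ uniformly, while $H+t_n A_n$ and each $F_j+t_n B_{n,j}$ are genuine distribution functions. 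I must show that $t_n^{-1}\{\phi(\theta+t_n h_n)-\phi(\theta)\}$ converges in $\ell^\infty([0,1]^p)$ to $\phi_\theta'(h)$. The key point is that, with $1/t_n$ in the role of the rate $r_n$ and with $A_n,B_{n,j}$ in the roles of the normalised errors $r_n(\hat H_n-H)$ and $r_n(\hat F_{n,j}-F_j)$, this is exactly the deterministic convergence that the extended continuous mapping theorem requires as its hypothesis in the proof of Theorem~\ref{thm:hybrid}. Since weak convergence of deterministic sequences coincides with ordinary convergence, the whole argument transfers verbatim, the distributional limits being replaced by uniform limits.

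Concretely I would proceed in three steps. First, I would check that $\phi_\theta'$ is linear and continuous on $\DD_0$: it is the difference of the evaluation $h\mapsto\alpha$ and the maps $h\mapsto\dot C_j(\cdot)\,\beta_j(\cdot_j)$, each bounded in the supremum norm because $0\le\dot C_j\le 1$ and the coordinate projections and compositions involved are continuous, the continuity of the trajectories on the compactified domains guaranteeing boundedness. I would also record here that, for $h\in\DD_0$, continuity of $\beta_j\circ F_j$ on $[-\infty,\infty]$ together with the identities $B_{n,j}(\pm\infty)=0$ forced by $F_j+t_n B_{n,j}$ being a distribution function yields that $\beta_j$ is continuous on $[0,1]$ with $\beta_j(0)=\beta_j(1)=0$, so the limit is well defined where $\dot C_j$ is absent. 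Second, and this is the substance, I would reproduce the first-order expansion underlying \eqref{eq:hybrid:repr} for the present deterministic perturbation, obtaining, uniformly in $\vc u\in[0,1]^p$,
\[
  \frac{\phi(\theta+t_n h_n)(\vc u)-C(\vc u)}{t_n}
  = A_n(\vc F\inv(\vc u)) - \sum_{j=1}^p \dot C_j(\vc u)\,B_{n,j}(F_j\inv(u_j))\,\1_{(0,1)}(u_j) + o(1).
\]
This comes from evaluating $H+t_n A_n$ at the perturbed quantiles $(F_j+t_n B_{n,j})\inv(u_j)$ and linearising each generalised inverse, precisely as in the proof of Theorem~\ref{thm:hybrid}. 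Third, I would let $n\to\infty$: uniform convergence gives $A_n(\vc F\inv(\vc u))\to\alpha(\vc u)$ and $B_{n,j}(F_j\inv(u_j))\to\beta_j(u_j)$ uniformly in $\vc u$ (the limits being uniformly continuous on the compactified cubes), so the right-hand side converges uniformly to $\alpha(\vc u)-\sum_{j}\dot C_j(\vc u)\,\beta_j(u_j)=(\phi_\theta'(h))(\vc u)$, the indicator disappearing in the limit because $\beta_j$ vanishes at the endpoints.

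The main obstacle is the uniform control near the boundary of the cube, where $\dot C_j(\vc u)$ need not be defined and the indicator $\1_{(0,1)}(u_j)$ intervenes: one has to show that the contribution of those $\vc u$ with some $u_j$ close to $0$ or $1$ is uniformly negligible. This is exactly where Condition~\ref{cond:C}(b), the continuity of $\alpha\circ\vc F$ on $[-\infty,\infty]^p$, and the vanishing of $\beta_j$ at the endpoints are used, and it is settled identically to the corresponding step in the proof of Theorem~\ref{thm:hybrid}. One small point deserves mention: Hadamard differentiability requires the convergence for \emph{every} sequence $t_n\to 0$, not merely for $t_n=r_n^{-1}$, but the expansion above uses nothing about $t_n$ beyond $t_n\to 0$, so no extra work is needed. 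The identification of the limit with $\phi_\theta'(h)$, and hence uniqueness of the derivative, is immediate from the displayed convergence.
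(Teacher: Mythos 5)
Your proposal is correct and follows essentially the same route as the paper: the paper's proof likewise sets $r_n = t_n^{-1}$, treats $h_n$ as a deterministic element so that Condition~\ref{cond:HF} holds trivially (weak convergence of constant maps being ordinary convergence), and invokes Theorem~\ref{thm:hybrid} directly rather than re-deriving the expansion. Your extra steps (linearity of $\phi_\theta'$, boundary control) are already subsumed in that appeal to the theorem, so nothing further is needed.
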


\section{Special cases and examples}
\label{sec:examples}

\begin{example}[Empirical copula I]
\label{ex:empcopproc:iid}
Let $\vc{X}_i = (X_{i,1}, \ldots, X_{i,p})$, for $i \in \{1, \ldots, n\}$, be an independent random sample from $H$. Let $\hat{H}_n$ and $\hat{F}_{n,j}$ be the joint and marginal empirical distribution functions:
\begin{align*}
  \hat{H}_n( \vc{x} ) &= \frac{1}{n} \sum_{i=1}^n \1(\vc{X}_i \le \vc{x}), && \vc{x} \in \reals^p, \\
  \hat{F}_{n,j}( x_j ) &= \frac{1}{n} \sum_{i=1}^n \1( X_{i,j} \le x_j ), && x_j \in \reals.
\end{align*}
The hybrid copula estimator $\hat{C}_n$ is then equal to the Deheuvels--R\"uschendorf empirical copula \citep{ruschendorf:1976, deheuvels:1979}. By classical empirical process theory (see Remark~\ref{rem:Donsker}), Condition~\ref{cond:HF} is satisfied with $r_n = \sqrt{n}$ and $\alpha$ a $C$-Brownian bridge and $\beta_j( u_j ) = \alpha(1, \ldots, 1, u_j, 1, \ldots, 1)$. Theorem~\ref{thm:hybrid} then just confirms the weak convergence of the empirical copula process \citep{Stute84, FRW04, Tsukahara05, vandervaart:wellner:2007, segers:2012}.
\end{example}

\begin{example}[Empirical copula II]
\label{ex:empcopproc:dep}
Let the random vectors $\vc{X}_1, \ldots, \vc{X}_n$ form a stretch of a stationary time series. By Remark~\ref{rem:margins}, the argument in Example~\ref{ex:empcopproc:iid} remains valid provided weak convergence \eqref{eq:HF:alpha} of the multivariate empirical process holds. The latter is typically true for weakly dependent, strictly stationary time series, in which case $\alpha$ is a centered Gaussian process whose covariance structure also depends on the serial dependence structure of the underlying time series \citep{rio:2000, doukhan:fermanian:lang:2009, dehling:durieu:2011, bucher:volgushev:2013}.
\end{example}

\begin{example}[Known margins]
\label{ex:ideal}
In the hypothetical situation that the margins are known, one may just set $\hat{F}_{n,j} = F_j$ for every $j \in \{1, \ldots, p\}$. Remark~\ref{rem:Donsker} applies with $f_{\vc{x}} = \1_{(-\vc{\infty}, \vc{x}]}$ and $f_{x, j} = 0$. The hybrid copula estimator $\hat{C}_n$ is then equal to the empirical distribution function of the vectors of uniform random variables $(F_1(X_{i,1}), \ldots, F_p(X_{i,p}))$, $i = 1, \ldots, n$. The conclusion is the well-known fact that $\sqrt{n} (\hat{C}_n - C)$ converges to a $C$-Brownian bridge.

In \citet{genest:segers:2010}, this `ideal' hybrid copula estimator was compared to the usual empirical copula. Surprisingly, it was concluded that for many  copulas, the empirical copula actually has the lower asymptotic variance.
\end{example}

\begin{example}[Margins modelled parametrically]
\label{ex:parametric}
Assume that the $j$th margin is modelled by a parametric family $(F_j(\,\cdot\, ;\theta_j) : \theta_j \in \Theta_j)$, where $\Theta_j$ is an open subset of $d_j$-dimensional Euclidean space. Then one may estimate $F_j$ parametrically rather than by the marginal empirical distribution function.

Specifically, let $\vc{X}_1, \ldots, \vc{X}_n$ be a random sample from $H$. Let $\hat{\theta}_{n,j}$ be an estimator of $\theta_j$. Estimate $F_j$ by plugging in the estimator for $\theta_j$:
\[
  \hat{F}_{n,j}(x_j) = F_j(x_j; \hat{\theta}_{n,j}), \qquad x_j \in \reals.
\]
To estimate the joint distribution, take for instance the empirical distribution function
\[
  \hat{H}_n( \vc{x} ) = \frac{1}{n} \sum_{i=1}^n \1( \vc{X}_i \le \vc{x} ), \qquad \vc{x} \in \reals^p.
\]
Combining $\hat{H}_n$ and $\hat{F}_{n,j}$ yields the hybrid copula estimator 
\[
  \hat{C}_n(\vc{u}) = \hat{H}_n \bigl( F_1\inv(u_1; \hat{\theta}_{n,1}), \ldots, F_p\inv(u_p; \hat{\theta}_{n,p}) \bigr),
  \qquad \vc{u} \in [0, 1]^p,
\]
containing both parametric and nonparametric components.

To apply Theorem~\ref{thm:hybrid}, we must check Condition~\ref{cond:HF}. In particular, we need to establish an asymptotic representation for $\hat{F}_{n,j}(x_j)$. Required are some basic smoothness assumption on the parametrization $\theta_j \mapsto F_j(\,\cdot\,; \theta_j)$ together with a central limit theorem for $\hat{\theta}_j$. Specifically, assume the following:
\begin{itemize}
\item[(i)] 
The map $\Theta_j \to \ell^\infty( \reals ) : \theta_j \mapsto F_j(\,\cdot\,; \theta_j)$ is differentiable in the sense that
\begin{multline}
\label{eq:Frechet}
  \sup_{x_j \in \reals} \abs{ F_j(x_j ; \theta_j + h ) - F_j(x_j ; \theta_j) - \textstyle{\sum_{k=1}^{d_j}} h_k \, \dot{F}_{j,k}( x_j ; \theta_j) } \\
  = o( \abs{h} ), \qquad \abs{h} \to 0,
\end{multline}
where $\abs{h}$ is the Euclidean norm of $h \in \reals^{d_j}$ and where $\dot{F}_{j,k}( \, \cdot \, ; \theta_j ) \in \ell^\infty( [-\infty,\infty] )$ is continuous and depends on $x_j$ only through $F_j(x_j; \theta_j)$.

To establish \eqref{eq:Frechet}, check that the partial derivatives of $F_j(x_j; \theta_j)$ with respect to the components of $\theta_j$ exist and are continuous and bounded on compact subsets of $[-\infty, +\infty] \times \Theta_j$.
\item[(ii)] 
The estimator $\hat{\theta}_{n,j}$ admits a linear expansion with influence function $\psi_j = (\psi_{j,1}, \ldots, \psi_{j,d_j})$, i.e.,
\[
  \sqrt{n} (\hat{\theta}_{n,j} - \theta_j) = \frac{1}{\sqrt{n}} \sum_{i=1}^n \psi_j( \vc{X}_i ) + o_p(1), \qquad n \to \infty.
\]
Moreover, $\expec{ \psi_{j,k}(\vc{X}_1) } = 0$ and $\expec{ \psi_{j,k}^2(\vc{X}_1) } < \infty$ for every component $k \in \{1, \ldots, d_j\}$.

The influence function $\psi_j$ may and in general will depend on the unknown value of $\theta_j$. Often, $\psi_j(\vc{x})$ will be a function of $\vc{x}$ only through $x_j$, but this is not required.
\end{itemize}
By the functional delta method \citep[Theorem~3.9.4]{vandwell96}, Assumptions~(i) and~(ii) imply that, as $n \to \infty$,
\begin{align*}
  \sqrt{n} \{ F_j( \, \cdot \, ;\hat{\theta}_{n,j} ) - F_j( \, \cdot \, ; \theta_j ) \}
  &= \sum_{k=1}^{d_j} \sqrt{n} (\hat{\theta}_{n,j,k} - \theta_{j,k}) \, \dot{F}_{j,k}( \, \cdot \, ; \theta_j ) + o_p(1) \\
  &= \sum_{k=1}^{d_j} \frac{1}{\sqrt{n}} \sum_{i=1}^n \psi_{j,k}( \vc{X}_i ) \, \dot{F}_{j,k}( \, \cdot \, ; \theta_j ) + o_p(1),
\end{align*}
the $o_p(1)$ terms referring to remainder terms that converge weakly to zero in the space $\ell^\infty( \reals )$. 
%
%
Remark~\ref{rem:Donsker} applies with $f_{\vc{x}} = \1_{(-\vc{\infty}, \vc{x}]}$ and
\[
  f_{x, j}(\, \cdot \,) = \sum_{k=1}^{d_j} \psi_{j,k}(\, \cdot \,) \, \dot{F}_{j,k}( x ; \theta_j ).
\]
We obtain \eqref{eq:HF} with
\[
  \beta_j(u_j) = \sum_{k=1}^{d_j} \dot{F}_{j,k}( F_j\inv(u_j; \theta_j) ; \theta_j ) \, \Gb \psi_{j,k}.
\]
In view of the conclusion at the end of Example~\ref{ex:ideal}, it is not certain that the hybrid copula estimator performs better than the empirical copula: bringing in the parametric models for the margins in this way is not necessarily helpful. As the above analysis shows, both the parametric models for the margins and the parameter estimators play a role.
\end{example}

\begin{example}[Missing data]
\label{ex:missing}
To show the use of the hybrid copula estimator if some data are missing, consider the following bivariate set-up. Given is an $n \times 2$ data matrix, in each row of which one or both entries may be missing. Formally, the observations consist of a sample of independent, identically distributed quadruples
\[
  (I_i, J_i, I_i X_i, J_i Y_i), \qquad i \in \{ 1, \ldots, n \}.
\]
The indicator variable $I_i$ ($J_i$) is equal to $1$ or $0$ according to whether $X_i$ ($Y_i$) is observed or not. The pairs $(I_i, J_i)$ and $(X_i, Y_i)$ are supposed to be independent, i.e., the data are missing completely at random. The indicators $I_i$ and $J_i$ may be dependent, and the probabilities of observing a data-row partially or completely are $\prob{ I_i = 1 } = p_X > 0$, $\prob{ J_i = 1 } = p_Y > 0$, and $\prob{ I_i = J_i = 1 } = p_{XY} > 0$. The estimation target is the copula, $C$, of the bivariate distribution, $H$, of the pairs $(X_i, Y_i)$. The margins, $F$ and $G$, of $H$ are assumed to be continuous and Condition~\ref{cond:C} is assumed to hold.

The marginal and joint distribution functions may be estimated using the data-rows for which the relevant information is available. For $(x, y) \in \reals^2$, put
\begin{align*}
  \hat{F}_n(x) &= \frac{ \sum_{i = 1}^n \1( X_i \le x, I_i = 1 ) }{ \sum_{i = 1}^n \1( I_i = 1 ) }, \\
  \hat{G}_n(y) &= \frac{ \sum_{i = 1}^n \1( Y_i \le y, J_i = 1 ) }{ \sum_{i = 1}^n \1( J_i = 1 ) }, \\
  \hat{H}_n(x, y) &= \frac{ \sum_{i = 1}^n \1( X_i \le x, Y_i \le y, I_i = J_i = 1 ) }{ \sum_{i = 1}^n \1( I_i = J_i = 1 ) }.
\end{align*}
Condition~\ref{cond:HF} can be verified by embedding the previous estimators in a certain empirical process. The resulting formulas resemble those for the classical empirical copula process, but now the asymptotic variances and covariances are to be multiplied by (the reciprocals of) the observation probabilities $p_X$, $p_Y$ and $p_{XY}$. Details are given at the end of Section~\ref{sec:proofs}.
\end{example}

\section{Proofs}
\label{sec:proofs}

First we show that the processes $\alpha$ and $\beta_j$ in Condition~\ref{cond:HF} are necessarily continuous almost surely. The proof is based on the following lemma.

\begin{lemma}
\label{lem:cnt:aux}
Let $F_1, \ldots, F_p$ be continuous univariate cumulative distribution functions and let $g : [0, 1]^p \to \reals$. If the map $\vc{x} \mapsto g(F_1(x_1), \ldots, F_p(x_p))$ is continuous on $[-\infty, \infty]^p$, then $g$ is continuous on $[0, 1]^p$.
\end{lemma}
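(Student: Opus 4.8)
The plan is to recover $g$ from the composite map $h := g \circ \vc{F}$ and to transfer the continuity of $h$ back to $g$ by a compactness argument, thereby circumventing the fact that the quantile maps $F_j\inv$ are in general only left-continuous.

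First I would regard each $F_j$ as a map on the compactified line $[-\infty, \infty]$, with $F_j(-\infty) = 0$ and $F_j(+\infty) = 1$. Since $F_j$ is a continuous distribution function, this extension is continuous on the compact space $[-\infty, \infty]$, and by the intermediate value theorem its range is all of $[0, 1]$; in particular $\vc{F} : [-\infty, \infty]^p \to [0, 1]^p$ is continuous and surjective. By hypothesis $h = g \circ \vc{F}$ is continuous on the compact product $[-\infty, \infty]^p$. The decisive elementary observation is that $h$ depends on $\vc{x}$ only through $\vc{F}(\vc{x})$: whenever $\vc{F}(\vc{x}) = \vc{u}$ one has $h(\vc{x}) = g(\vc{u})$, and surjectivity guarantees that at least one such preimage $\vc{x}$ exists for every $\vc{u} \in [0, 1]^p$.

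To prove continuity of $g$ at an arbitrary point $\vc{u}^\ast \in [0, 1]^p$, I would take any sequence $\vc{u}^{(n)} \to \vc{u}^\ast$ and, for each $n$, choose a preimage $\vc{x}^{(n)}$ with $\vc{F}(\vc{x}^{(n)}) = \vc{u}^{(n)}$ (for instance $\vc{x}^{(n)} = \vc{F}\inv(\vc{u}^{(n)})$, though any preimage will do). Since $[-\infty, \infty]^p$ is compact, every subsequence of $(\vc{x}^{(n)})$ has a further subsequence $\vc{x}^{(n_k)} \to \vc{x}^\infty$. Continuity of $\vc{F}$ then forces $\vc{F}(\vc{x}^\infty) = \lim_k \vc{F}(\vc{x}^{(n_k)}) = \lim_k \vc{u}^{(n_k)} = \vc{u}^\ast$, so that $h(\vc{x}^\infty) = g(\vc{u}^\ast)$, while continuity of $h$ yields $g(\vc{u}^{(n_k)}) = h(\vc{x}^{(n_k)}) \to h(\vc{x}^\infty) = g(\vc{u}^\ast)$. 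Thus every subsequence of $(g(\vc{u}^{(n)}))$ admits a further subsequence converging to $g(\vc{u}^\ast)$, which by the usual subsequence criterion gives $g(\vc{u}^{(n)}) \to g(\vc{u}^\ast)$. As $\vc{u}^\ast$ is arbitrary, $g$ is continuous on $[0, 1]^p$.

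The only genuine difficulty is that one cannot simply write $g = h \circ \vc{F}\inv$ and invoke continuity of a composition: the inverse $\vc{F}\inv$ jumps precisely where some $F_j$ has a flat stretch, so this composition need not be continuous. The compactness-and-subsequence device above is exactly what repairs this; intuitively it works because along a flat stretch of $F_j$ the map $h$ is constant (it sees only the common value $F_j(x_j)$), so the particular choice of preimage is immaterial in the limit. I would expect no further obstacles, the remaining points (continuity of $F_j$ at $\pm\infty$, surjectivity onto the full cube including its boundary, and the subsequence criterion) being routine.
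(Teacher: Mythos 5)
Your proof is correct, and it takes a cleaner route than the paper's. Both arguments lift the problem to the $\vc{x}$-space and run a subsequence argument, but they differ in how the convergent sequence of preimages is produced. The paper first passes to a sub-subsequence along which each coordinate $u_{n,j}$ approaches $u_j$ monotonically from one side, and then \emph{constructs} preimages explicitly, choosing the left-most or right-most generalized inverse of $F_j$ according to the direction of approach and verifying by hand that $x_{n,j} \to x_j$; this is the delicate part of that proof. You instead take \emph{arbitrary} preimages and invoke sequential compactness of $[-\infty,\infty]^p$ to extract a convergent sub-subsequence $\vc{x}^{(n_k)} \to \vc{x}^\infty$; continuity of $\vc{F}$ forces $\vc{F}(\vc{x}^\infty) = \vc{u}^\ast$, and the observation that $h = g \circ \vc{F}$ sees only $\vc{F}(\vc{x})$ makes the identity of the limit preimage irrelevant. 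That single observation is exactly what lets you dispense with the paper's careful choice of inverses, at the cost of a slightly more abstract (compactness-based) argument; the paper's version is more constructive but requires the case analysis. The auxiliary facts you rely on (continuity of the extension of $F_j$ to $\pm\infty$, surjectivity of $\vc{F}$ onto $[0,1]^p$ via the intermediate value theorem, and the standard subsequence criterion for convergence in $\reals$) are all sound.
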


\begin{proof}
Let $\vc{u}_n \to \vc{u}$ as $n \to \infty$ in $[0, 1]^p$. We need to show that $g( \vc{u}_n ) \to g( \vc{u} )$ as $n \to \infty$. For any subsequence $N \subset \mathbb{N}$, $\abs{N} = \infty$, we can find a further subsequence $M \subset N$, $\abs{M} = \infty$, along which the following property holds: for all $j \in \{1, \ldots, p\}$ we have either $u_{n,j} \le u_j$ for all $n \in M$ or $u_{n,j} \ge u_j$ for all $n \in M$. It suffices to show that $g( \vc{u}_n ) \to g( \vc{u} )$ as $n \to \infty$ in $M$.

Write $\vc{F}(\vc{x}) = (F_1(x_1), \ldots, F_p(x_p))$. Suppose we can find $\vc{x}_n$ (for $n \in M$) and $\vc{x}$ in $[-\infty,\infty]^p$ such that $\vc{F}(\vc{x}_n) = \vc{u}_n$ and $\vc{F}(\vc{x}) = \vc{u}$ and $\vc{x}_n \to \vc{x}$ as $n \to \infty$ in $M$. By continuity of $g \circ \vc{F}$, we then have
\begin{equation*}
  g( \vc{u}_n ) = g \circ \vc{F}( \vc{x}_n ) \to g \circ{F}( \vc{x} ) = g( \vc{u} ), \qquad n \to \infty, \; n \in M,
\end{equation*}
as required. Hence it suffices to find $(\vc{x}_n)_{n \in M}$ and $\vc{x}$ with the required properties. Fix $j \in \{1, \ldots, p\}$. 
\begin{compactitem}
\item If $u_{n,j} \le u_j$ for all $n \in M$, then define $x_j = \inf \{ y : F_j(y) = u_j \}$ and $x_{n,j} = \sup \{ y : y \le x_j, \; F_j(y) = u_{n,j} \}$.
\item If $u_{n,j} \ge u_j$ for all $n \in M$, then define $x_j = \sup \{ y : F_j(y) = u_j \}$ and $x_{n,j} = \inf \{ y : y \ge x_j, \; F_j(y) = u_{n,j} \}$.
\end{compactitem}
Then $F_j(x_j) = u_j$ and $F_{j}(x_{n,j}) = u_{n,j}$ by continuity of $F_j$. Moreover, $x_{n,j} \to x_j$ as $n \to \infty$ in $M$ by the specific choice of the inverses of $F_j$. Indeed, in the first case, we have, on the one hand, $x_{n,j} \le x_j$ for all $n \in M$ and, on the other hand, $\liminf_n x_{n,j} > F_j\inv(u_j) - \delta = x_j - \delta$ for every $\delta > 0$. The proof in the second case is similar.
\end{proof}

\begin{lemma}
\label{lem:cnt}
With probability one, the trajectories of the processes $\alpha$ and $\beta_1, \ldots, \beta_p$ in Condition~\ref{cond:HF} are continuous.
\end{lemma}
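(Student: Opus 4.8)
The plan is to reduce the statement to the auxiliary Lemma~\ref{lem:cnt:aux} by arguing one trajectory at a time. Condition~\ref{cond:HF} already asserts that the composed processes $\alpha \circ \vc{F}$ and $\beta_j \circ F_j$ have continuous trajectories almost surely, so the task is merely to strip off the inner $\vc{F}$ (respectively $F_j$) and transfer continuity back to the domains $[0,1]^p$ and $[0,1]$. Since $F_1, \ldots, F_p$ are continuous by Condition~\ref{cond:C}(a), Lemma~\ref{lem:cnt:aux} is exactly the deterministic device that performs this transfer.

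Concretely, I would first fix an event $\Omega_0$ of probability one on which, simultaneously, the trajectory $\vc{x} \mapsto (\alpha \circ \vc{F})(\vc{x})$ is continuous on $[-\infty,\infty]^p$ and, for each $j \in \{1, \ldots, p\}$, the trajectory $x \mapsto (\beta_j \circ F_j)(x)$ is continuous on $[-\infty,\infty]$. Because this is a finite intersection of events each of which has probability one by Condition~\ref{cond:HF}, the event $\Omega_0$ itself has probability one.

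Next, I would fix an arbitrary $\omega \in \Omega_0$ and apply Lemma~\ref{lem:cnt:aux} trajectory-wise. The trajectory $\alpha(\omega)$ is a real-valued function on $[0,1]^p$ whose composition with $\vc{F}$ is continuous on $[-\infty,\infty]^p$; taking $g = \alpha(\omega)$ in Lemma~\ref{lem:cnt:aux} yields that $\alpha(\omega)$ is continuous on $[0,1]^p$. For each $j$, applying the univariate case $p = 1$ of the same lemma with $g = \beta_j(\omega)$ and the continuous margin $F_j$ shows that $\beta_j(\omega)$ is continuous on $[0,1]$. As $\omega \in \Omega_0$ was arbitrary and $\operatorname{P}(\Omega_0) = 1$, the trajectories of $\alpha$ and of $\beta_1, \ldots, \beta_p$ are continuous with probability one.

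I do not expect a genuine obstacle here, since Lemma~\ref{lem:cnt:aux} absorbs all the analytic work. The only points requiring mild care are bookkeeping ones: ensuring that the probability-one event is assembled as a \emph{finite} intersection so that it remains of probability one, and checking that the hypotheses of Lemma~\ref{lem:cnt:aux} are met verbatim by each trajectory — in particular that each $\alpha(\omega)$ and $\beta_j(\omega)$ is a well-defined real-valued function on the appropriate cube, which holds because $\alpha$ and $\beta_j$ take values in $\ell^\infty([0,1]^p)$ and $\ell^\infty([0,1])$ by Condition~\ref{cond:HF}.
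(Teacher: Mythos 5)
Your proposal is correct and follows exactly the paper's own argument: the paper likewise derives Lemma~\ref{lem:cnt} as an immediate trajectory-wise application of Lemma~\ref{lem:cnt:aux} (with $p=1$ for the $\beta_j$). The bookkeeping you spell out — intersecting finitely many probability-one events and applying the deterministic lemma to each fixed trajectory — is precisely what the paper leaves implicit.
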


\begin{proof}
This is an immediate consequence of Lemma~\ref{lem:cnt:aux}. For $\beta_j$, apply the lemma with $p = 1$.
\end{proof}

The proof of Theorem~\ref{thm:hybrid} is based on a differentiability property of the map that sends a distribution function to its inverse function. In contrast to Lemma~3.9.20 in \cite{vandwell96}, Lemma~\ref{lem:Vervaat} below does not require the distribution function $F$ to have a density; $F$ need not even be strictly increasing between the two endpoints of its support.

\begin{lemma}
\label{lem:Vervaat}
Let $F_n, F : \reals \to [0, 1]$ be cumulative distribution functions. Assume that $F$ is continuous and assume that there exists a sequence $0 < r_n \to \infty$ and a continuous function $\beta : [0, 1] \to \reals$ such that
\begin{equation}
\label{eq:Vervaat:ass}
  \lim_{n \to \infty} \sup_{x \in \reals} \abs{ r_n \{F_n(x) - F(x)\} - \beta(F(x)) } = 0.
\end{equation}
Then $\beta(0) = \beta(1) = 0$ and
\begin{equation}
\label{eq:Vervaat}
  \lim_{n \to \infty} \sup_{u \in [0, 1]} \abs{ r_n \{F(F_n\inv(u)) - u\} + \beta(u) } = 0.
\end{equation}
In particular,
\begin{equation}
\label{eq:Vervaat:bis}
  \lim_{n \to \infty} \sup_{u \in [0, 1]} \abs{ r_n \{F(F_n\inv(u)) - u\} + r_n \{F_n(F\inv(u)) - u\} } = 0.
\end{equation}
\end{lemma}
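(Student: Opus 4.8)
The plan is to reduce everything to the single scalar quantity $\eps_n = \sup_{x \in \reals} \abs{ r_n \{ F_n(x) - F(x) \} - \beta(F(x)) }$, which tends to $0$ by~\eqref{eq:Vervaat:ass}, and then to read off all three conclusions by evaluating the hypothesis at suitably chosen arguments. For $\beta(0) = \beta(1) = 0$, I would fix $n$ and let $x \to -\infty$ in the inequality $\abs{ r_n\{F_n(x)-F(x)\} - \beta(F(x))} \le \eps_n$; since $F_n(x), F(x) \to 0$ while $\beta(F(x)) \to \beta(0)$ by continuity of $\beta$, this forces $\abs{\beta(0)} \le \eps_n$ for every $n$, hence $\beta(0) = 0$, and symmetrically $\beta(1) = 0$ by sending $x \to +\infty$.

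For the central estimate~\eqref{eq:Vervaat}, the idea is to bound $F(F_n\inv(u)) - u$ from both sides at the point $x = F_n\inv(u)$, which for $u \in (0,1)$ is finite because $F_n(-\infty) = 0 < u < 1 = F_n(+\infty)$. The lower bound comes directly from right-continuity, $F_n(F_n\inv(u)) \ge u$: substituting $x = F_n\inv(u)$ into the hypothesis gives $F_n(x) \le F(x) + r_n^{-1}\{\beta(F(x)) + \eps_n\}$, and combining with $u \le F_n(x)$ yields $r_n\{F(F_n\inv(u)) - u\} + \beta(F(F_n\inv(u))) \ge -\eps_n$. For the upper bound I would approach $x$ from the left: for $\delta > 0$ the definition of the infimum gives $F_n(x - \delta) < u$, while the hypothesis at $x - \delta$ gives $F_n(x-\delta) \ge F(x-\delta) + r_n^{-1}\{\beta(F(x-\delta)) - \eps_n\}$; letting $\delta \downarrow 0$ and invoking the continuity of $F$ (and of $\beta$) to pass $F(x-\delta) \to F(x)$ produces $r_n\{F(F_n\inv(u)) - u\} + \beta(F(F_n\inv(u))) \le \eps_n$. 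Together these give $\abs{ r_n\{F(F_n\inv(u)) - u\} + \beta(F(F_n\inv(u))) } \le \eps_n$ for every $u$ for which $F_n\inv(u)$ is finite, which includes all $u \in (0,1)$; the boundary values $u \in \{0,1\}$ with infinite inverse are immediate from $\beta(0) = \beta(1) = 0$ together with $F(F_n\inv(0)) = 0$ and $F(F_n\inv(1)) = 1$.

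It then remains to replace $\beta(F(F_n\inv(u)))$ by $\beta(u)$. The displayed bound already forces $\abs{F(F_n\inv(u)) - u} \le r_n^{-1}(\eps_n + \norm{\beta}_\infty)$, so $\sup_u \abs{F(F_n\inv(u)) - u} \to 0$; since $\beta$ is uniformly continuous on the compact interval $[0,1]$, this gives $\sup_u \abs{\beta(F(F_n\inv(u))) - \beta(u)} \to 0$, and the triangle inequality delivers~\eqref{eq:Vervaat}. Finally, for~\eqref{eq:Vervaat:bis} I would use that continuity of $F$ gives $F(F\inv(u)) = u$ for $u \in (0,1)$, so evaluating the hypothesis at $x = F\inv(u)$ yields $\abs{ r_n\{F_n(F\inv(u)) - u\} - \beta(u) } \le \eps_n$; adding this to~\eqref{eq:Vervaat} and again treating $u \in \{0,1\}$ separately gives the claim. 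I expect the main obstacle to be the two-sided control at $F_n\inv(u)$: because the lemma deliberately allows $F_n$ to have jumps and flat stretches, one cannot appeal to a density and must instead extract the upper bound through the one-sided limit $\delta \downarrow 0$, where it is the continuity of the \emph{limit} $F$ (not of $F_n$) that makes the argument go through.
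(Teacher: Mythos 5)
Your proof is correct and follows essentially the same route as the paper's: the same two-sided sandwich at $x = F_n\inv(u)$ with the one-sided limit $\delta \downarrow 0$ resting on the continuity of $F$, the same reduction of $\beta(F(F_n\inv(u)))$ to $\beta(u)$ via $\sup_u \abs{F(F_n\inv(u)) - u} = O(r_n^{-1})$ together with the uniform continuity of $\beta$, and the same substitution $x = F\inv(u)$ for \eqref{eq:Vervaat:bis}. The only (harmless) difference is a bookkeeping one: you take $\delta \downarrow 0$ inside the continuous envelope $\beta \circ F \pm \eps_n$, whereas the paper carries the left-hand limits of $\gamma_n = r_n(F_n - F)$ and then argues that $\sup_x \abs{\gamma_n(x-) - \gamma_n(x)} \to 0$.
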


An abstract way of stating \eqref{eq:Vervaat} is that the map sending a cumulative distribution function $G$ on $\reals$ to the distribution function $F \circ G\inv$ on $[0, 1]$ is Hadamard differentiable at $F$ tangentially to all functions of the form $\beta \circ F$ for some continuous function $\beta : [0, 1] \to \reals$, the derivative being given by the map $\beta \circ F \mapsto - \beta$.

\begin{proof}
First, note that $\beta(0) = \beta(1) = 0$. Indeed, since $F_n(x) - F(x) \to 0$ as $x \to -\infty$ for each fixed $n$, we can find a sequence $x_n \to -\infty$ sufficiently fast such that $r_n \{ F_n(x_n) - F(x_n) \} \to 0$ as $n \to \infty$ and thus 
\[ 
  \beta(0) 
  = \lim_{n \to \infty} \beta(F(x_n)) 
  = \lim_{n \to \infty} r_n \{F_n(x_n) - F(x_n)\} 
  = 0 
\] 
by uniform convergence. Similarly $\beta(1) = 0$.

It follows that in \eqref{eq:Vervaat}, we can restrict the range in the supremum to $u \in (0, 1]$, since $F(F_n\inv(0)) = F(-\infty) = 0$. [However, $F(F_n\inv(1))$ could be smaller than $1$.] Write
\begin{align*}
  \gamma_n &= r_n (F_n - F), &
  \gamma &= \beta \circ F,
\end{align*}
and note that $F_n = F + r_n^{-1} \gamma_n$. On the one hand, for every $u \in (0, 1]$,
\[
  u \le F_n( F_n\inv(u) )
  = F(F_n\inv(u)) + r_n^{-1} \gamma_n( F_n\inv(u) ),
\]
and thus
\[
  r_n \{ F(F_n\inv(u)) - u \} \ge - \gamma_n( F_n\inv(u) ).
\]
On the other hand, for every $u \in (0, 1]$ and every $\delta > 0$, we have
\begin{align*}
  u &> F_n( F_n\inv(u) - \delta ) \\
  &= F( F_n\inv(u) - \delta ) + r_n^{-1} \gamma_n( F_n\inv(u) - \delta ) \\
  &= F( F_n\inv(u) ) + F( F_n\inv(u) - \delta ) - F( F_n\inv(u) ) + r_n^{-1} \gamma_n( F_n\inv(u) - \delta ),
\end{align*}
and thus
\[
  r_n \{F( F_n\inv(u) ) - u \} < - \gamma_n( F_n\inv(u) - \delta ) + r_n \{ F( F_n\inv(u) ) - F( F_n\inv(u) - \delta ) \}.
\]
Since the latter inequality is true for every $\delta > 0$, we can take the limit as $\delta \to 0$. As $F$ is continuous, we obtain
\[
  r_n \{ F(F_n\inv(u)) - u \} \le - \gamma_n( F_n\inv(u) - )
\]
where $\gamma_n(x-)$ is the left-hand limit of $\gamma_n$ at $x$, a limit which must exist since $\gamma_n$ is the rescaled difference of two cumulative distribution functions. In combination, we find
\begin{equation}
\label{eq:Vervaat:sandwich}
  - \gamma_n( F_n\inv(u) ) \le r_n \{ F(F_n\inv(u)) - u \} \le - \gamma_n( F_n\inv(u) - ), \qquad u \in (0, 1].
\end{equation}
The difference between the left-hand and right-hand sides converges uniformly to zero: indeed, since the sequence $\gamma_n$ converges uniformly to the continuous function $\gamma$, we have
\[
  \lim_{n \to \infty} \sup_{x \in \reals} \abs{ \gamma_n(x-) - \gamma_n(x) } = 0.
\]
To show \eqref{eq:Vervaat}, it then suffices to show that
\[
  \lim_{n \to \infty} \sup_{u \in (0, 1]} \abs{ - \gamma_n( F_n\inv(u) ) + \beta(u) } = 0.
\]
By the triangle inequality and since $\gamma = \beta \circ F$,
\begin{multline*}
  \abs{ - \gamma_n( F_n\inv(u) ) + \beta(u) } \\
  \le
  \abs{ - \gamma_n( F_n\inv(u) ) + \gamma( F_n\inv(u) ) }
  + \abs{ - \beta(F(F_n\inv(u))) + \beta(u) }.
\end{multline*}
The first term on the right-hand side converges to zero uniformly in $u \in (0, 1]$ by uniform convergence of $\gamma_n$ to $\gamma$ on $\reals$. By uniform continuity of $\beta$ on $[0, 1]$, the second term on the right-hand side will converge to zero uniformly in $u \in [0, 1]$ if we can show that 
\[
  \lim_{n \to \infty} \sup_{u \in [0, 1]} \abs{F(F_n\inv(u)) - u} = 0.
\]
But the latter equation is a consequence of \eqref{eq:Vervaat:sandwich}, uniform convergence of $\gamma_n$ to the bounded function $\gamma$, and the fact that $r_n \to \infty$ as $n \to \infty$.

Finally, \eqref{eq:Vervaat:bis} follows from by choosing $x = F\inv(u)$ in \eqref{eq:Vervaat:ass}, yielding
\[
  \lim_{n \to \infty} \sup_{u \in [0, 1]} \abs{ r_n \{ F_n(F\inv(u)) - u \} - \beta(u) } = 0
\]
[note that $F(F\inv(u)) = u$ by continuity of $F$] and then using \eqref{eq:Vervaat} and the triangle inequality.
\end{proof}

\begin{lemma}
\label{lem:Vervaat:random}
Let $F : \reals \to [0, 1]$ be a continuous cumulative distribution function. Let $0 < r_n \to \infty$ and let $\hat{F}_n$ be a sequence of random cumulative distribution functions such that, in $\ell^\infty(\reals)$,
\begin{equation}
\label{eq:Delta:ass}
  r_n ( \hat{F}_n - F ) \dto \beta \circ F, \qquad n \to \infty,
\end{equation}
where $\beta$ is a random element of $\ell^\infty([0, 1])$ with continuous trajectories. Then $\beta(0) = \beta(1) = 0$ almost surely and
\begin{equation}
\label{eq:Delta:op1}
  \sup_{u \in [0, 1]} \abs{ r_n \{ F(\hat{F}_n\inv(u)) - u \} + r_n \{ \hat{F}_n(F\inv(u)) - u \} } = o_p(1).
\end{equation}
As a consequence, in $\ell^\infty([0, 1])$,
\begin{equation}
\label{eq:Delta:d}
  \bigl( r_n \{ F( \hat{F}_n\inv(u) ) - u \} \bigr)_{u \in [0, 1]}
  \dto - \beta, \qquad n \to \infty.
\end{equation}
\end{lemma}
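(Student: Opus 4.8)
The plan is to lift the deterministic Lemma~\ref{lem:Vervaat} to the present stochastic setting by means of the almost sure representation theorem, deriving the switching representation \eqref{eq:Delta:op1} first and then reading off the weak limit \eqref{eq:Delta:d}. First I would invoke the almost sure representation theorem \citep[Theorem~1.10.4]{vandwell96}: since $\beta$ has continuous trajectories, the limit $\beta \circ F$ in \eqref{eq:Delta:ass} is a tight, separable-valued random element of $\ell^\infty(\reals)$, so the theorem furnishes, on a suitable probability space, perfect maps carrying copies $\tilde{F}_n \stackrel{d}{=} \hat{F}_n$ with $r_n (\tilde{F}_n - F) \to \tilde{Z}$ almost surely in $\ell^\infty(\reals)$, where $\tilde{Z} \stackrel{d}{=} \beta \circ F$. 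Because $\beta \circ F$ is almost surely of the form $\gamma \circ F$ for some continuous $\gamma$, and the range of the continuous $F$ is dense in $[0, 1]$, the same structure is inherited by $\tilde{Z}$: we may write $\tilde{Z} = \tilde{\beta} \circ F$ with $\tilde{\beta}$ continuous and $\tilde{\beta} \stackrel{d}{=} \beta$.

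Next I would apply Lemma~\ref{lem:Vervaat} realisation by realisation. For $\omega$ in the almost-sure event, the distribution functions $F_n = \tilde{F}_n(\omega)$ and $F$ together with the continuous function $\tilde{\beta}(\omega)$ satisfy \eqref{eq:Vervaat:ass}, so Lemma~\ref{lem:Vervaat} yields $\tilde{\beta}(\omega)(0) = \tilde{\beta}(\omega)(1) = 0$ and the deterministic switching relation \eqref{eq:Vervaat:bis},
\[
  \sup_{u \in [0, 1]} \abs{ r_n \{ F(\tilde{F}_n\inv(u)) - u \} + r_n \{ \tilde{F}_n(F\inv(u)) - u \} } \to 0 .
\]
As this holds for almost every $\omega$, the displayed supremum tends to zero almost surely; and since $\tilde{\beta} \stackrel{d}{=} \beta$, the identity $\tilde{\beta}(0) = \tilde{\beta}(1) = 0$ transfers to $\beta(0) = \beta(1) = 0$ almost surely. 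For each $n$ the displayed supremum is a single functional evaluated at $\tilde{F}_n$, a perfect copy of $\hat{F}_n$, so it has the same outer law as the corresponding functional of $\hat{F}_n$ appearing in \eqref{eq:Delta:op1}; almost sure convergence of the former to zero therefore forces convergence in outer probability of the latter, which is precisely \eqref{eq:Delta:op1}.

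Finally, for \eqref{eq:Delta:d}, the map $T : z \mapsto (u \mapsto z(F\inv(u)))$ into $\ell^\infty([0, 1])$ is a contraction, hence continuous (extending trajectories to $[-\infty, \infty]$ by the usual conventions, so that the endpoints $u \in \{0, 1\}$ are covered via $\beta(0) = \beta(1) = 0$). Applying the continuous mapping theorem \citep[Theorem~1.3.6]{vandwell96} to \eqref{eq:Delta:ass} and using $F(F\inv(u)) = u$ gives $( r_n \{ \hat{F}_n(F\inv(u)) - u \} )_{u \in [0, 1]} \dto \beta$; combining this with \eqref{eq:Delta:op1} through Slutsky's lemma yields \eqref{eq:Delta:d}. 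Alternatively, \eqref{eq:Delta:d} follows at once from the functional delta method \citep[Theorem~3.9.4]{vandwell96}, since Lemma~\ref{lem:Vervaat} is precisely the Hadamard differentiability of $G \mapsto F \circ G\inv$ recorded after its statement.

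The main obstacle is the bookkeeping inherent in Hoffmann-J\o{}rgensen weak convergence: one must verify tightness of the limit to make the representation theorem available, confirm that $\tilde{\beta}$ is a genuine continuous-function copy of $\beta$ so that the pointwise application of Lemma~\ref{lem:Vervaat} is legitimate, and treat the suprema as outer-measurable functionals so that the perfect-copy transfer preserves their laws. Once these points are settled, the remainder is a routine deployment of the deterministic lemma and the continuous mapping theorem.
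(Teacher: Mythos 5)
Your proposal is correct, but it lifts the deterministic Lemma~\ref{lem:Vervaat} to the stochastic setting by a different device than the paper. The paper proves \eqref{eq:Delta:op1} with the \emph{extended continuous mapping theorem}: it defines maps $g_n$ on the sets $\Db_n$ of admissible perturbations, uses Lemma~\ref{lem:Vervaat} to check that $g_n(\gamma_n)\to 0$ whenever $\gamma_n\to\delta\circ F$ with $\delta$ continuous, and concludes $g_n(r_n(\hat F_n-F))\dto 0$ directly on the original probability space; the boundary values $\beta(0)=\beta(1)=0$ are obtained separately via the ordinary continuous mapping theorem applied to the tail functional $\gamma\mapsto\limsup_{\abs{x}\to\infty}\abs{\gamma(x)}$. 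You instead invoke the almost sure representation theorem and apply Lemma~\ref{lem:Vervaat} path by path, obtaining both the boundary values and the switching relation in one stroke, then transfer back through perfectness. Both routes rest on the same deterministic engine and are essentially equivalent in strength; the extended CMT spares you the construction of a new probability space and the measurability bookkeeping, while your route makes transparent that all that matters is that the limit concentrates on the tangent set $\{\delta\circ F:\delta\in C([0,1])\}$. The one step in your argument that genuinely needs justification --- that the representing copy $\tilde Z$ again has the form $\tilde\beta\circ F$ with $\tilde\beta$ continuous --- does go through: the set $S=\{\delta\circ F:\delta\in C([0,1])\}$ is a closed separable subspace of $\ell^\infty(\reals)$ (the map $\delta\mapsto\delta\circ F$ is an isometry onto $S$ because the range of $F$ is dense in $[0,1]$), so $\tilde Z\in S$ almost surely by equality of laws, and the inverse isometry yields $\tilde\beta\stackrel{d}{=}\beta$. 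Your derivation of \eqref{eq:Delta:d} via the contraction $z\mapsto z\circ F\inv$ and Slutsky matches the paper's in substance.
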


\begin{proof}
First, we show that $\beta(0) = \beta(1) = 0$ almost surely. Define the map $g : \ell^\infty( \reals ) \to \reals$ by $g(\gamma) = \inf_{M > 0} \sup_{x : \abs{x} \ge M} \abs{\gamma(x)} = \limsup_{\abs{x} \to \infty} \abs{\gamma(x)}$. The map $g$ is continuous with respect to the supremum distance. As $\hat{F}_n$ and $F$ are cumulative distribution functions, $g( r_n ( \hat{F}_n - F ) ) = 0$ almost surely. By weak convergence \eqref{eq:Delta:ass} and the continuous mapping theorem \citep[Theorem~1.3.6]{vandwell96}, it follows that $g( \beta \circ F ) = \max \{ \abs{ \beta(0) }, \abs{ \beta(1) } \} = 0$ almost surely too.


Equation~\eqref{eq:Delta:d} follows from combining \eqref{eq:Delta:ass} and \eqref{eq:Delta:op1}; use the triangle inequality and the fact that $u = F(F\inv(u))$.

We will show equation~\eqref{eq:Delta:op1} by an application of the extended continuous mapping theorem \citep[Theorem~1.11.1]{vandwell96}.

Let $\Db_n$ be the collection of all functions $\gamma \in \ell^\infty(\reals)$ such that $F + r_n^{-1} \gamma$ is a cumulative distribution function. In particular, $\gamma(\pm\infty) = \lim_{x \to \pm\infty} \gamma(x) = 0$. Define the map $g_n : \Db_n \to \ell^\infty([0, 1])$ by
\[
  (g_n(\gamma))(u) = r_n \{ F((F + r_n^{-1} \gamma)\inv(u)) - u \} + \gamma(F\inv(u)).
\]
Let $\gamma_n \in \Db_n$ be such that $\gamma_n \to \delta \circ F$ in $\ell^\infty(\reals)$, where $\delta : [0, 1] \to \reals$ is continuous. Put $F_n = F + r_n^{-1} \gamma_n$. Then $\gamma_n = r_n (F_n - F)$ and the conditions of Lemma~\ref{lem:Vervaat} are fulfilled. It follows that, in $\ell^\infty(\reals)$,
\[
  g_n(\gamma_n) 
  = r_n (F \circ F_n\inv - \id_{[0, 1]}) + r_n (F_n \circ F\inv - \id_{[0, 1]})
  \to 0, \qquad n \to \infty,
\]
where `$\id$' refers to the identity mapping. By construction, the maps $\hat{\gamma}_n = r_n( \hat{F}_n - F )$ take values in $\Db_n$. Given the assumption~\eqref{eq:Delta:ass} and the previous limit relation, we can then apply the extended continuous mapping theorem. We find that, in $\ell^\infty([0, 1])$,
\[
  g_n( r_n( \hat{F}_n - F ) ) \dto 0, \qquad n \to \infty.
\]
But this is precisely \eqref{eq:Delta:op1}. 
\end{proof}

\begin{lemma}
\label{lem:Taylor}
Let $C$ be a $p$-variate copula satisfying Condition~\ref{cond:C}(b). Let $0 < r_n \to \infty$ and, for each $n \in \Nb$ and $j \in \{1, \ldots, p\}$, let $\beta_{n,j} \in \ell^\infty([0, 1])$ be such that $0 \le u + r_n \beta_{n,j}(u) \le 1$ for all $u \in [0, 1]$. If, for each $j \in \{1, \ldots, p\}$, we have $\beta_{n,j} \to \beta_j$ in $\ell^\infty([0, 1])$ and if $\beta_j$ is continuous and $\beta_j(0) = \beta_j(1) = 0$, then, uniformly in $\vc{u} \in [0, 1]^p$,
\begin{multline}
\label{eq:Taylor}
  r_n \{ C(u_1 + r_n^{-1} \beta_{n,1}(u_1), \ldots, u_p + r_n^{-1} \beta_{n,p}(u_p)) - C(\vc{u}) \} \\
  = \sum_{j=1}^p \dot{C}_j( \vc{u} ) \, \beta_{n,j}(u_j) \, \1_{(0, 1)}(u_j) + o(1),
  \qquad n \to \infty.
\end{multline}
\end{lemma}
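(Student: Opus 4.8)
The plan is to expand $C$ around $\vc{u}$ by a telescoping argument, changing one coordinate at a time, and to control each increment with the mean value theorem. Write $v_{n,j}(u_j) = u_j + r_n^{-1}\beta_{n,j}(u_j)$, which lies in $[0,1]$ by assumption, and set $\vc{w}^{(k)} = (v_{n,1}(u_1), \ldots, v_{n,k}(u_k), u_{k+1}, \ldots, u_p)$, so that $\vc{w}^{(0)} = \vc{u}$ and $\vc{w}^{(p)}$ is the perturbed argument appearing on the left-hand side of \eqref{eq:Taylor}. Since $\beta_{n,j} \to \beta_j$ uniformly, the sup-norms $\|\beta_{n,j}\|_\infty$ are bounded by a constant $B$, and hence $\sup_j \sup_{u} |v_{n,j}(u) - u| \le r_n^{-1} B \to 0$. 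The telescoping identity $C(\vc{w}^{(p)}) - C(\vc{w}^{(0)}) = \sum_{j=1}^p \{C(\vc{w}^{(j)}) - C(\vc{w}^{(j-1)})\}$ reduces the problem to the $j$-th single-coordinate increment, in which only the $j$-th argument moves from $u_j$ to $v_{n,j}(u_j)$.

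\textbf{Mean value theorem and the copula bound.} For each $j$ I would apply the mean value theorem to $t \mapsto C(v_{n,1}(u_1), \ldots, v_{n,j-1}(u_{j-1}), t, u_{j+1}, \ldots, u_p)$: whenever the $j$-th coordinate genuinely moves, there is an intermediate value $\xi_{n,j}$ strictly between $u_j$ and $v_{n,j}(u_j)$, hence in $(0,1)$, at which $\dot{C}_j$ is defined by Condition~\ref{cond:C}(b), giving $r_n\{C(\vc{w}^{(j)}) - C(\vc{w}^{(j-1)})\} = \dot{C}_j(\vc{z}_{n,j})\,\beta_{n,j}(u_j)$, where $\vc{z}_{n,j}$ is $\vc{w}^{(j-1)}$ with its $j$-th coordinate replaced by $\xi_{n,j}$. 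The key structural fact I will use is that copula partial derivatives satisfy $0 \le \dot{C}_j \le 1$ wherever they exist, because $C$ is nondecreasing and $1$-Lipschitz in each argument; this uniform bound is what makes the boundary manageable. Note also that $\vc{z}_{n,j} \to \vc{u}$ uniformly, since each of its coordinates differs from the corresponding coordinate of $\vc{u}$ by at most $r_n^{-1}B$.

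\textbf{The main obstacle: the boundary.} It remains to show that $\dot{C}_j(\vc{z}_{n,j})\beta_{n,j}(u_j) - \dot{C}_j(\vc{u})\beta_{n,j}(u_j)\1_{(0,1)}(u_j) \to 0$ uniformly in $\vc{u}$, and this is where the only real difficulty lies: $\dot{C}_j$ is neither defined nor continuous up to $\{u_j \in \{0,1\}\}$, so one cannot simply invoke uniform continuity. I would therefore split $[0,1]^p$ according to the value of $u_j$. Given $\eta > 0$, use $\beta_j(0) = \beta_j(1) = 0$ together with the continuity of $\beta_j$ and the uniform convergence to pick $\epsilon > 0$ with $|\beta_{n,j}(u)| \le 2\eta$ for $u \in [0,\epsilon] \cup [1-\epsilon,1]$ and all large $n$. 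On this boundary region both $\dot{C}_j(\vc{z}_{n,j})\beta_{n,j}(u_j)$ and $\dot{C}_j(\vc{u})\beta_{n,j}(u_j)\1_{(0,1)}(u_j)$ are bounded in absolute value by $2\eta$ (using $|\dot{C}_j| \le 1$), so their difference is at most $4\eta$. This is precisely the mechanism by which the indicator $\1_{(0,1)}$ emerges: at $u_j \in \{0,1\}$ the target term vanishes, while the genuine increment is forced to zero by $\beta_{n,j}(u_j) \to 0$.

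\textbf{Interior and conclusion.} On the complementary region $u_j \in [\epsilon, 1-\epsilon]$, the indicator equals $1$, and for $n$ large both $\vc{u}$ and $\vc{z}_{n,j}$ lie in the compact set $K = \{\vc{u} \in [0,1]^p : \epsilon/2 \le u_j \le 1-\epsilon/2\} \subset \{0 < u_j < 1\}$, on which $\dot{C}_j$ is uniformly continuous; since $\|\vc{z}_{n,j} - \vc{u}\| \to 0$ uniformly, $|\dot{C}_j(\vc{z}_{n,j}) - \dot{C}_j(\vc{u})| \to 0$ uniformly, and multiplying by the bounded factor $\beta_{n,j}(u_j)$ keeps this $o(1)$. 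Summing the interior and boundary estimates over the $p$ coordinates bounds $\sup_{\vc{u}} |\text{LHS} - \text{RHS}|$ by a constant multiple of $\eta$ in the limit; letting $\eta \downarrow 0$ yields \eqref{eq:Taylor}. The crux of the argument is thus the lack of regularity of $\dot{C}_j$ at $u_j \in \{0,1\}$, resolved by pairing the universal bound $\dot{C}_j \le 1$ with the vanishing of the $\beta_j$ at the endpoints.
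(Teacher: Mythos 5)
Your proof is correct and follows essentially the same route as the paper's: a mean value theorem step that places the argument of $\dot{C}_j$ at intermediate points whose $j$th coordinate lies strictly between two points of $[0,1]$ and hence in $(0,1)$ where Condition~\ref{cond:C}(b) applies, followed by exactly the paper's boundary/interior split pairing the universal bound $0 \le \dot{C}_j \le 1$ with the vanishing of $\beta_j$ at $0$ and $1$ and with uniform continuity of $\dot{C}_j$ on $\{\vc{u} : \eps/2 \le u_j \le 1 - \eps/2\}$. The only difference is cosmetic: you telescope coordinate by coordinate and invoke the univariate mean value theorem $p$ times, whereas the paper applies it once to $x \mapsto C(\vc{u} + x\, r_n^{-1} \vc{\beta}_n(\vc{u}))$ along the full segment; both produce intermediate points within $r_n^{-1}\sup_j \norm{\beta_{n,j}}_\infty$ of $\vc{u}$, and the subsequent error analysis is identical (your variant even sidesteps, slightly more cleanly, the paper's need to argue that the terms with $\beta_{n,j}(u_j)=0$ in the derivative are harmless).
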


Observe that $\dot{C}_j( \vc{u} )$ is not defined if $u_j \in \{0, 1\}$. This is the reason for including the indicator $\1_{(0, 1)}(u_j)$ on the right-hand side of \eqref{eq:Taylor}.

\begin{proof}
For convenience, write
\[
  \vc{\beta}_{n}( \vc{u} ) = \bigl( \beta_{n,1}(u_1), \ldots, \beta_{n,p}(u_p) \bigr), \qquad \vc{u} \in [0, 1]^p.
\]
Fix $\vc{u} \in [0, 1]^p$ and $n \in \Nb$. Define $f : [0, 1] \to \reals$ by
\[
  f(x) = C( \vc{u} + x \, r_n^{-1} \, \vc{\beta}_n( \vc{u} ) ).
\]
The function $f$ is continuous on $[0, 1]$ and continuously differentiable on $(0, 1)$. Indeed, if $\beta_{n,j}(u_j) \ne 0$, then $u_j$ and $u_j + r_n^{-1} \beta_{n,j}( u_j )$ are two different points in $[0, 1]$, and thus 
\begin{equation}
\label{eq:strictconvex}
  \beta_{n,j}(u_j) \ne 0 \quad \Longrightarrow \quad \forall x \in (0, 1): 0 < u_j + x \, r_n^{-1} \, \beta_{n,j}( u_j ) < 1. 
\end{equation}
The derivative of $f$ is
\begin{equation}
\label{eq:fprime}
  f'(x) = \sum_{j=1}^p \dot{C}_j( \vc{u} + x \, r_n^{-1} \, \vc{\beta}_n(\vc{u}) ) \, r_n^{-1} \, \beta_{n,j}(u_j),
  \qquad x \in (0, 1).
\end{equation}
Because of \eqref{eq:strictconvex}, the right-hand side of \eqref{eq:fprime} is well-defined. By the mean value theorem, there exists $x_n(\vc{u}) \in (0, 1)$ such that
\begin{align*}
  r_n \{ C( \vc{u} + r_n^{-1} \vc{\beta}_n( \vc{u} ) ) - C( \vc{u} ) \}
  &= r_n \{ f(1) - f(0) \} 
  = r_n \, f'( x_n(\vc{u}) ) \\
  &= \sum_{j=1}^p \dot{C}_j( \vc{u} + x_n(\vc{u}) \, r_n^{-1} \, \vc{\beta}_n(\vc{u}) ) \, \beta_{n,j}(u_j).
\end{align*}
By the triangle inequality,
\[
  \abs{ 
    r_n \{ C( \vc{u} + r_n^{-1} \vc{\beta}_n( \vc{u} ) ) - C( \vc{u} ) \}
    -
    \sum_{j=1}^p \dot{C}_j( \vc{u} ) \, \beta_{n,j}(u_j) \, \1_{(0, 1)}(u_j)
  }
  \le \sum_{j=1}^p \Delta_{n,j}(\vc{u})
\]
where, for $\vc{u} \in [0, 1]^p$,
\[
  \Delta_{n,j}(\vc{u}) = 
  \abs{ 
    \dot{C}_j( \vc{u} + x_n(\vc{u}) \, r_n^{-1} \, \vc{\beta}_n(\vc{u}) ) - \dot{C}_j( \vc{u} ) \, \1_{(0, 1)}(u_j) 
  } \, 
  \abs{ \beta_{n,j}(u_j) }.
\]
Fix $j \in \{1, \ldots, p\}$. We need to show that $\lim_{n \to \infty} \norm{ \Delta_{n,j} }_\infty = 0$. Since $\lim_{n \to \infty} \beta_{n,j} = \beta_j$ in $\ell^\infty([0, 1])$, we have $\sup_{n \in \Nb} \norm{ \beta_{n,j} }_\infty = M < \infty$. Fix $\eps > 0$. As $\beta_j(0) = \beta_j(1) = 0$ and $\beta_j$ is continuous, there exists $n(\eps) \in \Nb$ and $\delta(\eps) \in (0, 1/2)$ such that 
\[ 
  \sup \{ \abs{ \beta_{n,j}(u_j) } : n \ge n(\eps), \, u_j \in [0, \delta(\eps)] \cup [1-\delta(\eps), 1] \} \le \eps. 
\]
By increasing $n(\eps)$ if necessary, we can also ensure that $M / r_n \le \delta(\eps) / 2$ for all $n \ge n(\eps)$. Split the supremum of $\Delta_{n,j}( \vc{u} )$ over $\vc{u} \in [0, 1]^p$ into two parts, according to whether $u_j \in [\delta(\eps), 1-\delta(\eps)]$ or not. Write $V_j(\delta) = \{ \vc{u} \in [0, 1]^p : \delta \le u_j \le 1-\delta \}$. 
\begin{itemize}
\item
On the one hand, writing $\abs{ \vc{w} }_\infty = \max \{ \abs{ w_1 }, \ldots, \abs{ w_p } \}$ for $\vc{ w } \in \reals^p$,
\[
  \sup_{\vc{u} \in V_j(\delta(\eps))} \abs{ \Delta_{n,j}( \vc{u} ) }
  \le M \, \sup_{ \substack{\vc{u}, \vc{v} \in V_j(\delta(\eps)/2)\\\abs{ \vc{u} - \vc{v} }_\infty \le M/r_n} } 
  \abs{ \dot{C}_j( \vc{u} ) - \dot{C}_j( \vc{v} ) }.
\]
By uniform continuity of $\dot{C}_j$ on $V_j(\delta)$ for any $\delta > 0$, the right-hand side converges to zero as $n \to \infty$.
\item
On the other hand, for $n \ge n(\eps)$, since $0 \le \dot{C}_j \le 1$,
\[
  \sup_{\vc{u} \in [0, 1]^p \setminus V_j(\delta(\eps))} \abs{ \Delta_{n,j}( \vc{u} ) }
  \le \eps.
\]
\end{itemize}
It follows that 
$
  \limsup_{n \to \infty} \norm{ \Delta_{n,j} }_\infty \le \eps.
$
As $\eps > 0$ was arbitrary, we conclude that $\lim_{n \to \infty} \norm{ \Delta_{n,j} }_\infty = 0$, as required.
\end{proof}

\begin{proof}[Proof of Theorem~\ref{thm:hybrid}]
By Lemma~\ref{lem:Vervaat:random}, we have, in $\ell^\infty([0, 1])$,
\begin{multline}
\label{eq:Vervaat:d:j}
  r_n ( F_j \circ \hat{F}_{n,j}\inv - \id_{[0, 1]} ) \\
  = - r_n ( \hat{F}_{n,j} \circ F_j\inv - \id_{[0, 1]} ) + o_p(1) 
  \dto - \beta_j, \qquad n \to \infty.
\end{multline}
Moreover, $\beta_j(0) = \beta_j(1) = 0$ almost surely.

For notational convenience, consider the random vector
\[
  \hat{\vc{F}}_n\inv( \vc{u} )
  = \bigl( \hat{F}_{n,j}\inv(u_1), \ldots, \hat{F}_{n,p}\inv(u_p) \bigr),
  \qquad \vc{u} \in [0, 1]^p.
\]
The following decomposition is fundamental to the analysis of the hybrid copula estimator $\hat{C}_n = \hat{H}_n \circ \hat{\vc{F}}_n\inv$:
\begin{equation}
\label{eq:hybrid:decomp}
  r_n (\hat{C}_n - C)
  = r_n ( \hat{H}_n \circ \hat{\vc{F}}_n\inv - H \circ \hat{\vc{F}}_n\inv )
  + r_n ( H \circ \hat{\vc{F}}_n\inv - C ).
\end{equation}
We will treat both terms on the right-hand side of \eqref{eq:hybrid:decomp} in turn.

As $H = C \circ \vc{F}$, the first term on the right-hand side in \eqref{eq:hybrid:decomp} is
\begin{multline}
\label{eq:hybrid:decomp:1}
  r_n ( \hat{H}_n \circ \hat{\vc{F}}_n\inv - H \circ \hat{\vc{F}}_n\inv )
  = r_n ( \hat{H}_n \circ \hat{\vc{F}}_n\inv - \hat{H}_n \circ \vc{F}\inv \circ \vc{F} \circ \hat{\vc{F}}_n\inv ) \\
  + r_n ( \hat{H}_n \circ \vc{F}\inv \circ \vc{F} \circ \hat{\vc{F}}_n\inv - C \circ \vc{F} \circ \hat{\vc{F}}_n\inv ).
\end{multline}
\begin{itemize}
\item 
The first term on the right-hand of \eqref{eq:hybrid:decomp:1} is $o_p(1)$ in $\ell^\infty([0, 1]^p)$ provided we can show that
\[
  r_n ( \hat{H}_n - \hat{H}_n \circ \vc{F}\inv \circ \vc{F} ) = o_p(1), \qquad n \to \infty.
\]
But the latter holds in view of the identity $H = H \circ \vc{F}\inv \circ \vc{F}$ (the margins of $H$ are $F_1, \ldots, F_p$ and these are continuous), Condition~\ref{cond:HF}, and the identity $\vc{F} \circ \vc{F}\inv \circ \vc{F} = \vc{F}$. 
\item
Regarding the second term on the right-hand side of \eqref{eq:hybrid:decomp:1}, note that, by \eqref{eq:Vervaat:d:j}, for every $j \in \{1, \ldots, p\}$,
\[
  F_j \circ \hat{F}_{n,j}\inv \dto \id_{[0, 1]}, \qquad n \to \infty,
\]
in $\ell^\infty([0,1])$. Moreover, by Condition~\ref{cond:HF} and the identities $C = H \circ \vc{F}\inv$ and $\vc{F} \circ \vc{F}\inv = \id_{[0, 1]^p}$, we have, in $\ell^\infty([0, 1]^p)$,
\[
  r_n( \hat{H}_n \circ \vc{F}\inv - C ) \dto \alpha \circ \vc{F} \circ \vc{F}\inv = \alpha, \qquad n \to \infty.
\]
By asymptotic uniform equicontinuity \citep[Theorem~1.5.7 and Addendum~1.5.8]{vandwell96}, as $n \to \infty$,
\[
  r_n ( \hat{H}_n \circ \vc{F}\inv \circ \vc{F} \circ \hat{\vc{F}}_n\inv - C \circ \vc{F} \circ \hat{\vc{F}}_n\inv )
  = r_n( \hat{H}_n \circ \vc{F}\inv - C ) + o_p(1)
\]
\end{itemize}
We find that, in $\ell^\infty([0, 1]^p)$,
\begin{equation}
\label{eq:hybrid:a}
  r_n ( \hat{H}_n \circ \hat{\vc{F}}_n\inv - H \circ \hat{\vc{F}}_n\inv )
  = r_n( \hat{H}_n \circ \vc{F}\inv - C ) + o_p(1)
  \dto \alpha, \qquad n \to \infty.
\end{equation}

The second term on the right-hand side in \eqref{eq:hybrid:decomp} is $r_n ( C \circ \vc{F} \circ \hat{\vc{F}}_n\inv - C )$. 
For $n \in \Nb$, let $\Db_n$ be the collection of $p$-tuples $\vc{\gamma} = (\gamma_1, \ldots, \gamma_p) \in \ell^\infty( \reals ) \otimes \cdots \otimes \ell^\infty( \reals )$ such that map $x \mapsto F_j(x) + r_n^{-1} \gamma_j(x)$ is a cumulative distribution function for each $j \in \{1, \ldots, p\}$. Define the map $g_n : \Db_n \to \ell^\infty([0, 1]^p)$ by
\begin{multline*}
  (g_n(\vc{\gamma}))(\vc{u})
  = r_n \{ C \circ \vc{F} \circ (\vc{F} + r_n^{-1} \vc{\gamma})\inv( \vc{u} ) - C( \vc{u} ) \} \\
  - \sum_{j=1}^p \dot{C}_j(\vc{u}) \, r_n \{ F_j \circ (F_j + r_n^{-1} \gamma_j)\inv(u_j) - u_j \} \, \1_{(0, 1)}(u_j),
  \qquad \vc{u} \in [0, 1]^p.
\end{multline*}
Let $\vc{\gamma}_n \in \Db_n$ be such that $\lim_{n \to \infty} \vc{\gamma}_n = \vc{\gamma}$ where $\gamma_j = \beta_j \circ F$ and $\beta_j \in \ell^\infty([0, 1])$ is continuous and satisfies $\beta_j(0) = \beta_j(1) = 1$ for every $j \in \{1, \ldots, p\}$. By Lemma~\ref{lem:Vervaat} with $F_{n,j} = F_j + r_n^{-1} \gamma_{n,j}$, we then have
\[
  r_n \{ F_j \circ (F_j + r_n^{-1} \gamma_j)\inv - \id_{[0, 1]} \} \, \1_{(0, 1)}
  \to - \beta_j, \qquad n \to \infty.
\]
By Lemma~\ref{lem:Taylor}, it then follows that
\[
  g_n(\vc{\gamma}_n) \to 0, \qquad n \to \infty.
\]
By the extended continuous mapping theorem \citep[Theorem~1.11.1]{vandwell96}, it follows that
\[
  g_n( r_n ( \hat{\vc{F}}_n - \vc{F} ) ) \dto 0, \qquad n \to \infty.
\]
But this says exactly that, uniformly in $\vc{u} \in [0, 1]^p$, as $n \to \infty$,
\begin{multline*}
  r_n \{ C \circ \vc{F} \circ \hat{\vc{F}}_n\inv( \vc{u} ) - C( \vc{u} ) \} \\
  = \sum_{j=1}^p \dot{C}_j( \vc{u} ) \, r_n \{ F_j \circ \hat{F}_{n,j}\inv(u_j) - u_j \} \, \1_{(0, 1)}(u_j) + o_p(1).
\end{multline*}
Insert \eqref{eq:Vervaat:d:j} to deduce that, uniformly in $\vc{u} \in [0, 1]^p$ and as $n \to \infty$,
\begin{multline}
\label{eq:hybrid:b}
  r_n \{ C \circ \vc{F} \circ \hat{\vc{F}}_n\inv( \vc{u} ) - C( \vc{u} ) \} \\
  = - \sum_{j=1}^p \dot{C}_j( \vc{u} ) \, r_n \{ \hat{F}_{n,j} \circ F_j\inv(u_j) - u_j \} \, \1_{(0, 1)}(u_j) + o_p(1).
\end{multline}

Collect the representations in \eqref{eq:hybrid:a} and \eqref{eq:hybrid:b} of the two terms on the right-hand side of \eqref{eq:hybrid:decomp} and apply Condition~\ref{cond:HF} to arrive at the stated conclusion.
\end{proof}

\begin{proof}[Details for Example~\ref{ex:missing}]
Consider the following functions from $\{0, 1\}^2 \times \reals^2$ into $\reals$: for $(x, y) \in \reals^2$,
\begin{align*}
  f_1(I, J, X, Y) &= \1(I = 1), &
  g_{1,x}(I, J, X, Y) &= \1(X \le x, I = 1), \\
  f_2(I, J, X, Y) &= \1(J = 1), &
  g_{2,y}(I, J, X, Y) &= \1(Y \le y, J = 1), \\
  f_3 &= f_1 f_2, &
  g_{3,x,y} &= g_{1,x} g_{2,y}.
\end{align*}
Let $P$ denote the common distribution of the quadruples $(I_i, J_i, X_i, Y_i)$. The collection of functions
\[
  \mathcal{F} = \{ f_1, f_2, f_3 \} \cup \{ g_{1,x} : x \in \reals \} \cup \{ g_{2,y} : y \in \reals \} \cup \{ g_{3,x,y} : (x, y) \in \reals^2 \}
\]
is a finite union of VC-classes and thus $P$-Donsker \citep[Chapter~2.6]{vandwell96}. The empirical process $\Gb_n$ defined by
\[
  \Gb_n(f) = \sqrt{n} \left( \frac{1}{n} \sum_{i=1}^n f(I_i, J_i, X_i, Y_i) - \expec{ f(I_1, J_1, X_1, Y_1) } \right), \qquad f \in \mathcal{F},
\]
converges in $\ell^\infty( \mathcal{F} )$ to a $P$-Brownian bridge $\Gb$. For $(x, y) \in \reals^2$,
\begin{align*}
  \hat{F}_n(x) &= \frac{p_X \, F(x) + n^{-1/2} \Gb_n g_{1,x} }{ p_X + n^{-1/2} \Gb_n f_1 }, \\
  \hat{G}_n(y) &= \frac{p_Y \, G(y) + n^{-1/2} \Gb_n g_{2,y} }{ p_y + n^{-1/2} \Gb_n f_2 }, \\
  \hat{H}_n(x, y) &= \frac{p_{XY} \, H(x, y) + n^{-1/2} \Gb_n g_{3,x,y}}{ p_{XY} + n^{-1/2} \Gb f_3 }.
\end{align*}
It follows that, as $n \to \infty$ and uniformly in $(x, y) \in \reals^2$,
\begin{align*}
  \sqrt{n} \{ \hat{F}_n(x) - F(x) \}
  &= p_X^{-1} \Gb_n (g_{1,x} - F(x) \, f_1) + O_p(n^{-1/2}), \\
  \sqrt{n} \{ \hat{G}_n(x) - G(x) \}
  &= p_Y^{-1} \Gb_n (g_{2,x} - G(x) \, f_2) + O_p(n^{-1/2}), \\
  \sqrt{n} \{ \hat{H}_n(x, y) - H(x, y) \}
  &= p_{XY}^{-1} \Gb_n (g_{3,x,y} - H(x, y) f_3) + O_p(n^{-1/2}).
\end{align*}
As a consequence, Condition~\ref{cond:HF} is fulfilled with, for $(u, v) \in [0, 1]^2$,
\begin{align*}
  \beta_1(u) &= p_X^{-1} \Gb(g_{1,F\inv(u)} - u \, f_1), \\
  \beta_2(v) &= p_Y^{-1} \Gb(g_{2,G\inv(v)} - v \, f_2), \\
  \alpha(u, v) &= p_{XY}^{-1} \Gb(g_{3,F\inv(u),G\inv(v)} - C(u, v) \, f_3).
\end{align*}
From these formulas, the variances and covariances can be easily computed: for $(u, u_1, u_2, v, v_1, v_2) \in [0, 1]^6$,
\begin{align*}
  \cov{ \beta_1(u_1), \beta_1(u_2) } &= p_X^{-1} \, \{ u_1 \wedge u_2 - u_1 u_2 \}, \\
  \cov{ \beta_2(v_1), \beta_2(v_2) } &= p_Y^{-1} \, \{ v_1 \wedge v_2 - v_1 v_2 \}, \\
  \cov{ \beta_1(u), \beta_2(v) } &= \frac{p_{XY}}{p_X p_Y} \, \{ C(u, v) - uv \},
\end{align*}
and
\begin{align*}
  \cov{ \alpha(u_1, v_1), \alpha(u_2, v_2) } &= p_{XY}^{-1} \, \{ C( u_1 \wedge u_2, v_1 \wedge v_2 ) - C(u_1, v_1) \, C(u_2, v_2) \}, \\
  \cov{ \alpha(u_1, v), \beta_1(u_2) } &= p_X^{-1} \, \{ C( u_1 \wedge u_2, v ) - C(u_1, v) \, u_2 \}, \\
  \cov{ \alpha(u, v_1), \beta_2(v_2) } &= p_Y^{-1} \, \{ C( u, v_1 \wedge v_2 ) - C(u, v_1) \, v_2 \}.
\end{align*}
\end{proof}

\begin{proof}[Proof of Corollary~\ref{cor:Hdm}]
Fix $h \in \DD_0$. Let $0 < t_n \to 0$ in $\reals$ and let $h_n = ( \gamma_n; \delta_{n1}, \ldots, \delta_{np} ) \in \DD$ be such that $h_n \to h$ in $\DD$ as $n \to \infty$ and $\theta + t_n h_n \in \DD_\phi$ for all $n$. Write $H_n = H + t_n \gamma_n$ and $F_{nj} = F_j + t_n \delta_{nj}$. Then $\gamma_n = r_n (H_n - H)$ and $\delta_{nj} = r_n (F_{nj} - F_j)$ with $r_n = t_n^{-1}$. Consider $h_n$ as a deterministic random element taking values in $\DD$. Then Conditions~\ref{cond:C} and~\ref{cond:HF} are fulfilled with $\hat{H}_n$ and $\hat{F}_{nj}$ replaced by $H_n$ and $F_{nj}$, respectively. By Theorem~\ref{thm:hybrid} and in particular by equation~\eqref{eq:hybrid:asym}, we find, in $\ell^\infty([0, 1]^p)$,
\begin{align*}
  t_n^{-1} \bigl( \phi( \theta + t_n h_n ) - \phi( \theta ) \bigr)
  &= r_n ( H_n \circ \vc{F}_n\inv - C ) \\
  &\dto \phi_\theta'(h), \qquad n \to \infty.
\end{align*}
Since weak convergence of constant maps in a metric space is equivalent to the ordinary convergence of their images, the proof is complete.
\end{proof}

\section*{Acknowledgments}

Fruitful discussions with Christian Genest (McGill University) and Axel B\"ucher (Ruhr-Universit\"at Bochum) are gratefully acknowledged. The research project was funded by contract ``Projet d'Act\-ions de Re\-cher\-che Concert\'ees'' No.\ 12/17-045 of the ``Communaut\'e fran\c{c}aise de Belgique'' and by IAP research network Grant P7/06 of the Belgian government (Belgian Science Policy).

\bibliographystyle{chicago}
\bibliography{Biblio}

\begin{thebibliography}{}

\bibitem[\protect\citeauthoryear{B\"ucher and Volgushev}{B\"ucher and
  Volgushev}{2013}]{bucher:volgushev:2013}
B\"ucher, A. and S.~Volgushev (2013).
\newblock Empirical and sequential empirical copula processes under serial
  dependence.
\newblock {\em Journal of Multivariate Analysis\/}~{\em 119}, 61--70.

\bibitem[\protect\citeauthoryear{de~Haan and Resnick}{de~Haan and
  Resnick}{1977}]{dehaan:resnick:1977}
de~Haan, L. and S.~I. Resnick (1977).
\newblock Limit theory for multivariate sample extremes.
\newblock {\em Z. Warhsch.\ Verw.\ Gebiete\/}~{\em 40}, 317--337.

\bibitem[\protect\citeauthoryear{Deheuvels}{Deheuvels}{1978}]{deheuvels:1978}
Deheuvels, P. (1978).
\newblock Caract\'erisation compl\`ete des lois extr\^emes multivari\'ees et de
  la convergence aux types extr\^emes.
\newblock {\em Publ.\ Inst.\ Statist.\ Univ.\ Paris\/}~{\em 23}.

\bibitem[\protect\citeauthoryear{Deheuvels}{Deheuvels}{1979}]{deheuvels:1979}
Deheuvels, P. (1979).
\newblock La fonction de d\'ependance empirique et ses propri\'et\'es.
\newblock {\em Bulletin de la Classe des Sciences, Acad\'emie Royale de
  Belgique\/}~{\em 65}, 274--292.

\bibitem[\protect\citeauthoryear{Dehling and Durieu}{Dehling and
  Durieu}{2011}]{dehling:durieu:2011}
Dehling, H. and O.~Durieu (2011).
\newblock Empirical processes of multidimensional systems with multiple mixing
  properties.
\newblock {\em Stochastic Processes and their Applications\/}~{\em 121\/}(5),
  1076--1096.

\bibitem[\protect\citeauthoryear{Doukhan, Fermanian, and Lang}{Doukhan
  et~al.}{2009}]{doukhan:fermanian:lang:2009}
Doukhan, P., J.-D. Fermanian, and G.~Lang (2009).
\newblock An empirical central limit theorem with applications to copulas under
  weak dependence.
\newblock {\em Statistical Inference for Stochastic Processes\/}~{\em 12\/}(1),
  65--87.

\bibitem[\protect\citeauthoryear{Fermanian, Radulovi\'c, and Wegkamp}{Fermanian
  et~al.}{2004}]{FRW04}
Fermanian, J.-D., D.~Radulovi\'c, and M.~H. Wegkamp (2004).
\newblock Weak convergence of empirical copula processes.
\newblock {\em Bernoulli\/}~{\em 10}, 847--860.

\bibitem[\protect\citeauthoryear{Galambos}{Galambos}{1978}]{galambos:1978}
Galambos, J. (1978).
\newblock {\em The Asymptotic Theory of Extreme Order Statistics}.
\newblock New York: Wiley.

\bibitem[\protect\citeauthoryear{Genest and Segers}{Genest and
  Segers}{2010}]{genest:segers:2010}
Genest, C. and J.~Segers (2010).
\newblock On the covariance of the asymptotic empirical copula process.
\newblock {\em Journal of Multivariate Analysis\/}~{\em 101}, 1837--1845.

\bibitem[\protect\citeauthoryear{Rio}{Rio}{2000}]{rio:2000}
Rio, E. (2000).
\newblock {\em Th\'eorie asymptotique des processus al\'eatoires faiblement
  d\'ependants}, Volume~31 of {\em Math\'ematiques \& Applications (Berlin)
  [Mathematics \& Applications]}.
\newblock Berlin: Springer-Verlag.

\bibitem[\protect\citeauthoryear{R{\"u}schendorf}{R{\"u}schendorf}{1976}]{ruschendorf:1976}
R{\"u}schendorf, L. (1976).
\newblock Asymptotic distributions of multivariate rank order statistics.
\newblock {\em The Annals of Statistics\/}~{\em 4\/}(5), 912--923.

\bibitem[\protect\citeauthoryear{Segers}{Segers}{2012}]{segers:2012}
Segers, J. (2012).
\newblock Asymptotics of empirical copula processes under non-restrictive
  smoothness assumptions.
\newblock {\em Bernoulli\/}~{\em 18\/}(3), 764--782.

\bibitem[\protect\citeauthoryear{Sklar}{Sklar}{1959}]{sklar:1959}
Sklar, M. (1959).
\newblock Fonctions de r\'epartition \`a $n$ dimensions et leurs marges.
\newblock {\em Publ.\ Inst.\ Statist.\ Univ.\ Paris\/}~{\em 8},
  229\textendash231.

\bibitem[\protect\citeauthoryear{Stute}{Stute}{1984}]{Stute84}
Stute, W. (1984).
\newblock The oscillation behavior of empirical processes: The multivariate
  case.
\newblock {\em The Annals of Probability\/}~{\em 12}, 361\textendash379.

\bibitem[\protect\citeauthoryear{Tsukahara}{Tsukahara}{2005}]{Tsukahara05}
Tsukahara, H. (2005).
\newblock Semiparametric estimation in copula models.
\newblock {\em The Canadian Journal of Statistics\/}~{\em 33},
  357\textendash375.

\bibitem[\protect\citeauthoryear{van~der Vaart and Wellner}{van~der Vaart and
  Wellner}{2007}]{vandervaart:wellner:2007}
van~der Vaart, A. and J.~A. Wellner (2007).
\newblock Empirical processes indexed by estimated functions.
\newblock In {\em Asymptotics: Particles, Processes and Inverse Problems},
  Volume~55 of {\em IMS Lecture Notes--Monograph Series}, pp.\  234--252.

\bibitem[\protect\citeauthoryear{van~der Vaart and Wellner}{van~der Vaart and
  Wellner}{1996}]{vandwell96}
van~der Vaart, A.~W. and J.~A. Wellner (1996).
\newblock {\em Weak Convergence and Empirical Processes}.
\newblock New York: Springer-Verlag.

\end{thebibliography}
\end{document}